\documentclass[letterpaper, 10 pt, conference]{ieeeconf}
\usepackage{graphicx} 
\usepackage{cite}
\usepackage{xcolor}

\usepackage{amsthm}
\usepackage{amsmath}
\usepackage{float}
\usepackage{mathtools}
\mathtoolsset{showonlyrefs}
\usepackage{amssymb}
\usepackage{graphicx}
\usepackage{lmodern}
\usepackage{subcaption}
\usepackage{xcolor}
\usepackage{mathrsfs}
\usepackage{comment}
\usepackage{lastpage} 
\usepackage{extramarks} 
\usepackage[bb=dsserif]{mathalpha} 
\usepackage{epstopdf} 
\usepackage{amsmath}
\usepackage{algorithm}
\usepackage{algcompatible}
\usepackage{bbm}
\usepackage{enumerate}

\usepackage[utf8]{inputenc}
\usepackage[normalem]{ulem}

\usepackage{tikz}
\usepackage{pgfplots}
\usepackage{url}
\usepackage[]{subcaption}
\usepgfplotslibrary{groupplots,dateplot}
\usetikzlibrary{patterns,shapes.arrows, fadings}
\usetikzlibrary{automata} 
\usetikzlibrary{shapes.geometric}
\usetikzlibrary{calc,shadows}
\usetikzlibrary{positioning} 
\usetikzlibrary{arrows, arrows.meta} 
\tikzset{node distance=2.5cm, 
every state/.style={ 
semithick,
fill=gray!10},
initial text={}, 
double distance=2pt, 
every edge/.style={ 
draw,
->,>=stealth', 
auto,
semithick}}
\usetikzlibrary{spy}
\pgfplotsset{compat=newest}
\newlength\figH
\newlength\figW
\theoremstyle{definition}
\newtheorem{definition}{\protect\definitionname}

\newtheorem{lemma}{\protect\lemmaname}

\newtheorem{mechanism}{\protect\mechanismname}

\theoremstyle{plain}
\newtheorem{theorem}{\protect\theoremname}
\newtheorem{problem}{Problem}

\DeclareMathOperator{\adj}{Adj}

\DeclareMathOperator{\Binom}{Binom}

\newcommand{\pushright}[1]{\ifmeasuring@#1\else\omit\hfill$\displaystyle#1$\fi\ignorespaces}

\makeatother

\providecommand{\assumptionname}{Assumption}
\providecommand{\definitionname}{Definition}
\providecommand{\lemmaname}{Lemma}
\providecommand{\theoremname}{Theorem}
\providecommand{\mechanismname}{Mechanism}
\providecommand{\propositionname}{Proposition}

\newcommand{\cov}[2]{\text{Cov}\left(#1, #2\right)}
\newcommand{\E}[1]{\mathbb{E}\left[#1\right]}

\newcommand{\Eem}[1]{\mathbb{E}_{Exp}\left[#1\right]}

\newcommand{\Adjacent}[2]{\adj_{n,b}(#1, #2)=1}

\newcommand{\ones}{\mathbb{1}}

\newcommand{\prob}[1]{\mathbb{P}\left(#1 \right)}
\newcommand{\probem}[1]{\mathbb{P}_{Exp}\left(#1 \right)}

\newboolean{show_comments}
\setboolean{show_comments}{true} 
\ifthenelse{\boolean{show_comments}}{
    \newcommand{\Alex}[1]{\textcolor{blue}{Alex:~#1}} 
    \newcommand{\mh}[1]{\textcolor{red}{MH:~#1}} 
    \newcommand{\brandon}[1]{\textcolor{orange}{bf:~#1}} 
} {
    \newcommand{\Alex}[1]{}
    \newcommand{\mh}[1]{}
    \newcommand{\brandon}[1]{}
}

\IEEEoverridecommandlockouts
\title{ \LARGE \bf Differential Privacy for Symbolic Trajectories \\via the Permute-and-Flip Mechanism}
\author{Alexander Benvenuti, Huaiyuan Rao, Matthew Hale
\thanks{
School of  Electrical and Computer Engineering, Georgia Institute of Technology, Atlanta, GA USA.
Emails: \texttt{\{abenvenuti3, hrao43, matthale\}@gatech.edu}.
}
\thanks{
This work was partially supported by AFRL under grant FA8651-23-F-A008, 
NSF under CAREER grant 2422260 and
Graduate Research Fellowship grant DGE-2039655, 
ONR under grant N00014-21-1-2502, and AFOSR under grant FA9550-19-1-0169. Any opinions, findings and conclusions or recommendations expressed in this material are those of the authors and do not necessarily reflect the views of sponsoring agencies.
}
}
\begin{document}
\maketitle
\begin{abstract}
  Privacy techniques have been developed for data-driven systems,   
  but systems with non-numeric data cannot use typical noise-adding techniques. 
  Therefore, we develop a new mechanism for privatizing state trajectories of symbolic systems that may be 
  represented as words over a finite alphabet. 
  Such systems include Markov chains, Markov decision processes, and finite-state automata, and
  we protect their symbolic trajectories with differential privacy. 
  The mechanism we develop 
  randomly selects a private approximation to be released in place of the original sensitive word, 
  with a bias towards low-error private words.
 This work is based on the permute-and-flip mechanism for differential privacy, which can be applied
  to non-numeric data. However, a na\"{\i}ve implementation 
  would have to enumerate an exponentially large list of words to generate a private word.
  As a result, we develop a new mechanism that generates private words without ever needing to enumerate such a list.
  We prove that the accuracy of our mechanism is never worse than the prior state of the art, and 
  we empirically show on a real traffic dataset that it 
  introduces up to~$55$\% less error than the prior state of the art under a conventional 
  privacy implementation. 
\end{abstract}
\section{Introduction}
With the proliferation of data-driven systems, interest has arisen in 
developing 
techniques to privatize the user data they require, e.g., in traffic systems~\cite{glancy2012privacy, hassan2019differential} and smart power grids~\cite{guan2018privacy}. Absent such protections, observers may make accurate inferences about sensitive information, such as home occupancy~\cite{lundstrom2016detecting} and daily traveling routines~\cite{gong2011iterative}. These systems require user data to function, 
and thus there exists a need to preserve users' privacy 
across a broad range of systems. 

Therefore, in this work we develop a framework for privatizing trajectories generated by symbolic systems.
Symbolic systems generate sequences of non-numeric data, which are often represented as words or strings over a finite alphabet. 
An example class of such systems is Markov chains, in which trajectories are sequences of states that may represent user locations in domestic time use~\cite{widen2009combined}, the intersections 
traversed by a user 
in a traffic system~\cite{besenczi2020large}, or websites a user has visited~\cite{rendle2010factorizing}. Such trajectories may be sensitive, and 
we develop a privacy framework for them. 

We use differential privacy to develop this framework. Differential privacy is a statistical notion of privacy that was developed in the computer science literature to protect sensitive database entries when databases are queried~\cite{dwork2014algorithmic}. Differential privacy has been used in data-driven systems because of its desirable properties, specifically that it is (i) robust to side information, in that learning additional information about the underlying sensitive data does not substantially weak differential privacy, and (ii) immune to post-processing, in that post-hoc computations on differentially private data do not weaken its protections. These properties have led to the development of differential privacy frameworks for filtering~\cite{le2013differentially}, multi-agent control~\cite{yazdani2022differentially, hawkins2020differentially, chen2023differential, benvenuti2023differentially}, and optimization~\cite{huang2015differentially, han2016differentially, hale2017cloud, benvenuti2024guaranteed, benvenuti2025differentially}, among others. 
These works and others in control and optimization have 
implemented differential privacy by using the Gaussian and/or Laplace mechanisms to add noise to sensitive
data (or functions thereof). 

Symbolic systems present a challenge because noise cannot be added to non-numeric data. 
Prior work in~\cite{jones2019towards,chen2023differentialsymbolic} developed privacy
for symbolic systems by implementing the exponential mechanism, which is 
designed for non-numeric data. 
More recently, the permute-and-flip mechanism~\cite{mckenna2020permute} has emerged as an improvement on the exponential mechanism. It offers accuracy that is equal to or better than the exponential mechanism in all cases,
and it is often the optimal mechanism for generating private outputs (in a sense we make precise
in Section~\ref{sec:mech_design}). 

However, the permute-and-flip mechanism can have prohibitive time complexity over large output spaces, and 
the development of efficient, domain-specific implementations is an open problem~\cite{mckenna2020permute}. 
In this work, we solve that open problem 
for symbolic systems. 
Our approach consists of first randomly selecting a number of errors for the private output word to have, and 
then constructing an automaton that uniformly samples from the set of all private output words 
that are (i) the same length as the sensitive input word and (ii) have the selected number errors. 


To summarize, we make the following contributions:
\begin{itemize}
    \item We develop a mechanism for privatizing symbolic trajectories based on the permute-and-flip mechanism
    (Mechanism~\ref{mech:prob1}, Theorem~\ref{thm:mech1_priv}).
    \item We bound the expected Hamming distance between a sensitive trajectory and its privatized form, 
    and we show it is never worse than the previous state of the art (Theorem~\ref{thm:accuracy_bounds}).
    \item We specialize this mechanism to Markov chains (Mechanism~\ref{mech:prob3}, Theorem~\ref{thm:mech2}). 
    \item We empirically evaluate our approach on a real traffic dataset and show it incurs~$55\%$ 
    less error than the prior state of the art (Section~\ref{sec:sims}).
\end{itemize}

\subsection{Related Work}
To privatize symbolic trajectories,~\cite{jones2019towards,chen2023differentialsymbolic} develop efficient mechanisms based on the the exponential mechanism~\cite{dwork2014algorithmic}. 
The current paper draws in part from~\cite{chen2023differentialsymbolic} by using its techniques 
for efficiently computing samples of a probability distribution over a large set of words. 
However, the work in this paper fundamentally differs from~\cite{chen2023differentialsymbolic} by developing
and sampling from an entirely different distribution when implementing differential privacy. 
We show that this approach results in lower expected error. 

Several forms of privacy for Markov chains have previously been considered in~\cite{fallin2023differential, guner2023learning, benvenuti2026differentiallyprivatedatadrivenmarkov}. However, these works all consider privacy for the transition probabilities and data used to compute transition probabilities, while we consider privacy for trajectories.

\subsection{Notation}
We use~$\mathbb{N}$ to denote the set of non-negative integers and~$\mathbb{N}^+$ to denote the set of 
positive integers. For~$N\in\mathbb{N}^+$, we define~$[N] = \{1,\ldots,N\}$. We use~$\sum_{S\subseteq \mathcal{R}}$ to denote the sum over all subsets~$S$ of some finite set~$\mathcal{R}$. We use~$|B|$ to denote the cardinality of a finite set~$B$. We use~$\ones^{n}$ to denote a row vector of all~$1$'s of length~$n$. 
\section{Background and Problem Statements}
\subsection{Symbolic Systems}
Symbolic systems may be defined in terms of finite-state automata, which we define next.

\begin{definition}[Finite State Automaton]
    A finite state automaton (FSA) is a tuple~$A = (Q, \Sigma, q^0, \delta, F)$, where~$Q$ is a set of states,~$\Sigma$ is an input alphabet,~$q^0\in Q$ is the initial state,~$\delta: Q\times \Sigma \to Q$ is transition function between states, and~$F\subseteq Q$ is the set of accepting states. We use~$Q^n$ to denote the set of all state sequences of length~$n$ and~$Q^*$ to denote the set of all finite state sequences. We similarly define~$\Sigma^n$ as the set of all words of length~$n$ over~$\Sigma$.
\end{definition}
Given an FSA~$A = (Q, \Sigma, q^0, \delta, F)$, if the transition function~$\delta$ is nondeterministic, i.e.,~$\delta: Q\times \Sigma \to 2^Q$, then~$A$ is called a nondeterministic finite-state automaton (NFA). A run on an NFA~$A = (Q, \Sigma, q^0, \delta, F)$ induced by a word~$w = \sigma_1\cdots\sigma_n\in\Sigma^n$ is a sequence of states~$q = q_0\cdots q_n\in Q^n$ such that~$q_0 = q^0$ and~$q_{i+1} \in \delta(q_i, \sigma_{i+1})$. 
The automaton~$A$ accepts a word~$w$ if the final state of the induced run is an accepting state~$q_n\in F$. The set of all words accepted by the automaton~$A$ is its language, denoted by~$\mathcal{L}(A)$.

Throughout this work, we use the Hamming distance to compare two words of the same length, 
and it is denoted~$d(w, v)$ for words~$w, v$, which is equal to the number of positions in which the corresponding symbols differ, i.e.,~$d(w, v) = |\{i\mid w_i\neq v_i\}|$.

\subsection{Markov Chains}
Markov chains are a widely used class of symbolic systems. 
    A discrete time stochastic process~$(Y_t)_{t\in\mathbb{N}}$ on a state space~$\mathcal{Y}$ is a Markov chain if it satisfies the Markov property, i.e.,
    $\prob{Y_{t+1} = y_{t+1} \mid Y_1 = y_1, Y_2 = y_2, \ldots, Y_t = y_t}  = \prob{Y_{t+1} = y_{t+1} \mid Y_t = y_t}.$
Throughout this work, we denote Markov chains by a tuple~$(\mathcal{Y}, T, y_0)$, where~$\mathcal{Y}$ is the state space,~$T$ is the transition probability matrix, and~$y_0\in \mathcal{Y}$ is the initial state. The probability of transitioning from state~$y_i$ to state~$y_j$ is~$T_{y_i,y_j} = \prob{y_j\mid y_i}$. 
State~$y_j$ is \emph{feasible} from state~$y_i$ if~$T_{y_i,y_j}>0$. 
For~$n \in \mathbb{N}^+$, fix a time horizon~$[n]$, and let~$\mathcal{Y}^n$ denote the set of all sequences of length~$n$ with initial state~$y_0$.  Any such sequence can be identified with a word~$w = y_0\cdots y_n\in\mathcal{Y}^n$. 
The word~$w$ is feasible 
for a given Markov chain 
if $T_{y_{t+1},y_t}>0$ for all~$t\in[n-1]$. 
The set of all feasible words of length~$n$ is 
denoted $\mathcal{L}(\mathcal{Y}^n)$. 

\subsection{Differential Privacy}
Given an alphabet~$\Sigma$, we provide privacy to a sensitive word~$w\in\Sigma^n$ by using differential privacy. 
The goal of differential privacy is to make ``similar'' pieces of data appear approximately indistinguishable. The notion of ``similar'' is defined by an adjacency relation.
\begin{definition}[Word Adjacency;~\cite{jones2019towards}]\label{def:word_adj}
    Fix a length~$n\in\mathbb{N}^+$ and an adjacency parameter~$b\in\mathbb{N}^+$.     
    Two words~$w, v\in\Sigma^n$ are said to be adjacent if~$d(w, v)\leq b$. 
\end{definition}

Differential privacy is enforced by a randomized mapping called a ``mechanism'', which we denote by~$\mathcal{M}$. 

\begin{definition}[Word Differential Privacy; \cite{jones2019towards}]\label{def:word_dp}
    Fix a probability space~$(\Omega, \mathcal{F}, \mathbb{P})$, an adjacency parameter~$b\in\mathbb{N}^+$, a word length~$n\in\mathbb{N}^+$, and a privacy parameter~$\epsilon>0$. A mechanism~$\mathcal{M}: \Sigma^n\times \Omega\to \Sigma^n$ is word~$\epsilon$-differentially private if, for all~$w, v$ adjacent in the sense of Definition~\ref{def:word_adj} and all~$L\subseteq \Sigma^n$,
     $   \prob{\mathcal{M}(w)\in L}\leq e^{\epsilon}\prob{\mathcal{M}(v)\in L}.$
\end{definition}

The parameter~$\epsilon$ quantifies the strength of privacy, and a smaller value of~$\epsilon$ implies stronger privacy. Typical values of~$\epsilon$ range from 0.1 to  10~\cite{hsu2014differential}.

\subsection{Problem Statements}

\begin{problem}\label{prob:mech_design}
    Design a privacy mechanism to generate private approximations for symbolic trajectories.
\end{problem}

\begin{problem}\label{prob:accuracy}
    Bound the accuracy of the proposed mechanism in terms of the privacy parameter~$\epsilon$, the length of the word~$n$, and the size of the alphabet~$|\Sigma|$. 
\end{problem}

\begin{problem}\label{prob:markov_chains}
    Extend the mechanism from Problem~\ref{prob:mech_design} to privatize trajectories of a Markov chain while ensuring that all private output words are feasible for the given Markov 
    chain.    
\end{problem}

\begin{problem}\label{prob:sims}
    Empirically compare the mechanism from Problem~\ref{prob:markov_chains} to the state
of the art and quantify its improvement in accuracy.   
\end{problem}

\section{Mechanism Design and Analysis}\label{sec:mech_design}
In this section, we solve Problems~\ref{prob:mech_design} and~\ref{prob:accuracy}. 
Given an NFA~$A = (Q, \Sigma, q^0, \delta, F)$ and a sensitive word~$w = \sigma_1\cdots\sigma_n\in\Sigma^n$, to enforce word differential privacy we randomly generate a private 
word~$w' = \sigma'_1\cdots \sigma'_n\in \Sigma^n$. To do so, we input the sensitive word~$w$ into an NFA we define in this section.
We design that NFA so that its state trajectories are
private approximations to~$w$, and one of those state trajectories is 
used as the private output word~$w'$ generated by the mechanism we develop. 

\subsection{Mechanism Design}
This subsection solves Problem~\ref{prob:mech_design}.
As described in the Introduction, we use the permute-and-flip mechanism to privatize
non-numerical data. It assigns probabilities to all possible outputs based on a utility score, which encodes how well a private output word~$w'$ approximates a sensitive input word~$w$. Throughout this work, we use the utility function 
\begin{equation} \label{eq:udef}
u(w, w') = -d(w, w'), 
\end{equation}
which 
encodes the fact that~$w'$ is a better approximation for a sensitive input word~$w$
if it differs from~$w$ by fewer symbols. 

In the next lemma and definition, we consider private output words that are in some pre-specified set $L \subseteq \Sigma^n$, which is the set of possible
private output words that are feasible for a given system. 
If all words in~$\Sigma^n$ are feasible,
then we may set $L = \Sigma^n$. However, some state transitions are not possible in some systems, such as Markov chains, and we allow for~$L \neq \Sigma^n$ for problems in which only a 
subset of the words in~$\Sigma^n$ 
are feasible.
To calibrate the privacy mechanism we use, 
we first require the sensitivity of the utility function. 

\begin{lemma}[Sensitivity; \cite{jones2019towards}]\label{lem:sensitivity}
    Fix an alphabet~$\Sigma$, a word length~$n\in\mathbb{N}^+$, a set~$L\subseteq \Sigma^n$, and an adjacency parameter~$b\in\mathbb{N}^+$. Then the sensitivity of the utility function~$u$ from~\eqref{eq:udef} is
    \begin{equation}
        \Delta u = \max_{v\in L}\max_{\substack{w_1, w_2\in L\\ \Adjacent{w_1}{w_2}}} |u(w_1, v) - u(w_2, v)| \leq b.
    \end{equation}
\end{lemma}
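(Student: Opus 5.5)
The bound follows from the triangle inequality for the Hamming distance, applied inside the double maximum. Since $u(w,v) = -d(w,v)$ by~\eqref{eq:udef}, for any fixed $v\in L$ and any adjacent pair $w_1,w_2\in L$ we have
\begin{equation}
|u(w_1,v)-u(w_2,v)| = |d(w_2,v)-d(w_1,v)|,
\end{equation}
and it suffices to show that this quantity is at most $b$ for every such triple. Taking the maximum over $w_1,w_2$ and then over $v$ preserves the bound, because each term being maximized is bounded by $b$. The maximum exists since $L\subseteq\Sigma^n$ is finite. If $L$ contains no adjacent pair, the inner maximum is vacuous and we take it to be $0$ by convention.

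The key step is that $d$ is a metric on $\Sigma^n$; in particular it satisfies the triangle inequality. I will verify this coordinatewise. If $(w_1)_i\neq v_i$, then either $(w_1)_i\neq (w_2)_i$ or $(w_2)_i\neq v_i$. Hence
\begin{equation}
\{i : (w_1)_i\neq v_i\}\subseteq \{i: (w_1)_i\neq (w_2)_i\}\cup\{i: (w_2)_i\neq v_i\},
\end{equation}
and taking cardinalities gives $d(w_1,v)\le d(w_1,w_2)+d(w_2,v)$. Exchanging the roles of $w_1$ and $w_2$, and using the symmetry $d(w_1,w_2)=d(w_2,w_1)$, which is immediate from the definition, gives the reverse inequality. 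Together these yield
\begin{equation}
|d(w_1,v)-d(w_2,v)|\le d(w_1,w_2).
\end{equation}

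Finally, adjacency in the sense of Definition~\ref{def:word_adj}, written $\adj_{n,b}(w_1,w_2)=1$ in the statement, means exactly that $d(w_1,w_2)\le b$. Substituting this into the previous display gives $|u(w_1,v)-u(w_2,v)|\le b$ for all $v\in L$ and all adjacent $w_1,w_2\in L$, which proves $\Delta u\le b$. The only substantive step is the reverse triangle inequality for $d$; everything else is bookkeeping over the maxima.
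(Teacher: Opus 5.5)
Your proof is correct, and it is the standard argument: the paper gives no proof of its own and imports the lemma from \cite{jones2019towards}, whose bound rests on exactly the reverse triangle inequality $|d(w_1,v)-d(w_2,v)|\le d(w_1,w_2)\le b$ that you derive. One small remark: every word is adjacent to itself since $d(w,w)=0\le b$, so the inner maximum is vacuous only when $L=\emptyset$, and your convention is needed only in that trivial case.
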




Next we formally state the permute-and-flip mechanism
\begin{definition}[Permute-and-Flip; \cite{mckenna2020permute}]\label{def:PF}
    Fix an alphabet~$\Sigma$, a word length~$n\in\mathbb{N}^+$, a set~$L\subseteq \Sigma^n$, and an adjacency parameter~$b\in\mathbb{N}^+$. For a sensitive input word~$w\in L\subseteq \Sigma^n$, the permute-and-flip mechanism~$\mathcal{M}_{PF}$ selects the private output word~$w'\in L$ with probability
    \begin{equation}\label{eq:pf_pmf}
        \prob{\mathcal{M}_{PF}(w) = w'} = \exp\left(\frac{\epsilon u(w, w')}{2\Delta u}\right)\Psi(b, L, w'),
    \end{equation}
    where
\begin{equation}\label{eq:psi}
    \Psi(b, L, w') = \sum_{k=0}^{|L|}\frac{(-1)^k}{k+1}\sum_{\substack{S\subseteq L\\|S| = k\\ w'\not\in S}} \prod_{s\in S} \exp\left(\frac{\epsilon u(w, s)}{2\Delta u}\right).
\end{equation}
\end{definition}

A direct implementation of the permute-and-flip mechanism requires knowledge of the Hamming distance between the sensitive input word and every word in~$\Sigma^n$ to compute~$\Psi(b, L, w')$. There are~$m^n$ total strings of length~$n$ on an alphabet of~$m$ symbols, and enumerating all such strings 
is prohibitive for large sensitive input words or large alphabets. 
The authors in~\cite{chen2023differentialsymbolic} faced the same challenge when implementing the exponential mechanism for privatizing words, and they developed an efficient implementation for selecting a private output word which does not require enumerating all strings of length~$n$. 

We seek to develop a similar framework that implements the permute-and-flip mechanism. Inspired by the approach in~\cite{chen2023differentialsymbolic}, we use the modified Hamming distance automaton.

\begin{definition}[Modified Hamming distance automaton; \cite{chen2023differentialsymbolic}]\label{def:MNFA}
    Fix an alphabet~$\Sigma$ and a word length~$n\in\mathbb{N}^+$. For a word~$x\in\Sigma^n$ and a distance~$\ell\in\mathbb{N}$, the modified Hamming distance NFA (MNFA) is an NFA~$A_{x, \ell} = (Q_{x, \ell}, \Sigma, q^0, \delta, F_{n, \ell})$ such that~$\mathcal{L}(A_{x, \ell})$ is the set of all words of length~$n$ with Hamming distance from~$x$ equal to~$\ell$. Each state~$q\in Q_{x, \ell}$ can transfer to another state by a policy~$\mu(\cdot, \cdot\mid q_i) : Q_{x, \ell} \times \Sigma\to[0, 1]$, where~$\mu(q_{i+1}, \sigma_{i+1}\mid q_i)$ is the probability that the input symbol~$\sigma_{i+1}$ causes a transition from state~$q_i$ to state~$q_{i+1}$. 
\end{definition}

Below, Mechanism~\ref{mech:prob1} defines a privacy mechanism that uses an MNFA to generate private output words. 
Although the MNFA model 
in Definition~\ref{def:MNFA}
does not define an output, its state trajectories are the outputs
of the mechanism we define. With an abuse of terminology, we sometimes
call state trajectories of the MNFA ``outputs'' when discussing
privacy. 

An MNFA begins with an empty output word and appends symbols one at a time 
until it produces a word in~$\mathcal{L}(A_{x,\ell})$. Such a word has length~$n$
and has~$\ell$ errors, which are differences between the output word and the input word it approximates. 
A state~$q_{i, j}$ contains the current length of the private output word that is being assembled, which is~$i$, and the number of errors currently in it, which is~$j$. Then, based on the policy~$\mu$, a new symbol is appended that 
increments~$i$ and possibly~$j$. 
This process is repeated until a private output word is generated with length~$n$ and~$\ell$ errors.

\begin{algorithm}[t]
    \caption{Constructing MNFA and Policy Synthesis~\cite{chen2023differentialsymbolic}}
    \label{algo:MNFA}
    \begin{algorithmic}[1]
    \STATEx \textbf{Inputs}: Sensitive word $w$ with length~$n$, transition function $\delta$, accepting set $\{q_{n, \ell}\}$
    \STATEx \textbf{Outputs}: Policy~$\mu$
    \vspace{1mm}
    \STATE $V(q_{n, \ell}) = 1$
    \STATE $CurrQ = \{q_{n, \ell}\}$
    \STATE $ActiveQ = \{\}$
    \FOR{$iteration\in [n]$}
    \FOR{$q'\in CurrQ$}
    \FOR{$(q, \sigma)~s.t.~q'\in \delta(q, \sigma)$}
         \STATE $V(q) = \sum_{q''\mid \exists \alpha, q''\in\delta(q, \alpha)} V(q'')$
         \STATE $\mu(q', \sigma \mid q) = \frac{V(q')}{V(q)}$
         \STATE $ActiveQ = ActiveQ\cup\{q\}$ 
         \ENDFOR
    \ENDFOR
    \STATE $CurrQ = ActiveQ$
    \STATE $ActiveQ = \{\}$
    \ENDFOR
    \end{algorithmic}
\end{algorithm}
\begin{figure}
    \centering
    \resizebox{\linewidth}{!}{
    \begin{tikzpicture}[
        ->, 
        >={Stealth[length=2.5mm]},
        node distance=2.5cm and 2.5cm, 
        on grid, 
        auto, 
        every state/.style={
            draw, 
            circle, 
            minimum size=1.2cm, 
            inner sep=0pt
        },
        initial text=start 
    ]

    
    \node[state, initial] (00) at (0,0) {$0^0$};

    \node[state] (10) at (3,0) {$1^0$};
    \node[state] (11) at (3,2.5) {$1^1$};

    \node[state] (21) at (6,2.5) {$2^1$};
    \node[state] (22) at (6,5) {$2^2$};

    \node[state, accepting] (32) at (9,5) {$3^2$};

    
    \path
        (00) edge node {$\textcolor{red}{\frac{1}{3}} a$} (10)
        (00) edge node {$\textcolor{red}{\frac{1}{3}}b,\textcolor{red}{\frac{1}{3}}c$} (11)

        (11) edge node {$\textcolor{red}{\frac{1}{2}}b$} (21)
        (11) edge node {$\textcolor{red}{\frac{1}{4}}a,\textcolor{red}{\frac{1}{4}}c$} (22)

        (10) edge node {$\textcolor{red}{\frac{1}{2}}a,\textcolor{red}{\frac{1}{2}}c$} (21)

        (22) edge node {$\textcolor{red}{1}c$} (32)

        (21) edge node[swap] {$\textcolor{red}{\frac{1}{2}}a,\textcolor{red}{\frac{1}{2}}b$} (32) 

    ;

    \end{tikzpicture}
}
    \caption{Modified Hamming distance automaton for all private output words of length~$3$ and Hamming distance~$2$ from the sensitive input word~$abc$ over the alphabet~$\{a, b, c\}$. Each circle represents a state where the base of the number in the circle is the current length of the private output word and the superscript is the number of errors made
    on the way to reaching that state. Each arrow represents a feasible transition, and the state with the double circle represents the accepting state. Probabilities from the policy~$\mu$ are in red.}
    \label{fig:MNFA}
\end{figure}
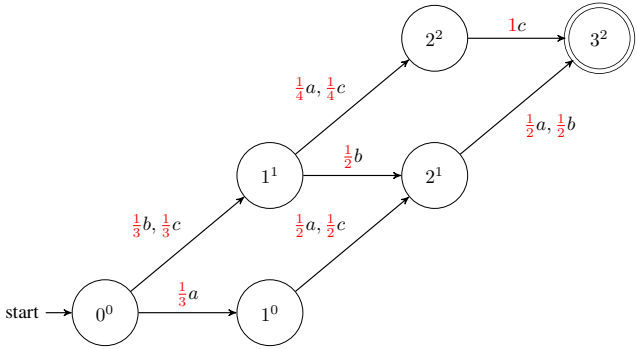

 Algorithm~\ref{algo:MNFA} is used to construct the MNFA 
  and synthesize the policy~$\mu$. 
 Algorithm~\ref{algo:MNFA} works by assigning a function~$V:Q_{w, \ell}\to \mathbb{N}$ such that~$V(q)$ is the number of unique paths in the MNFA from the state~$q\in Q_{w, \ell}$ that end in the accepting state~$q_{n, \ell}$. The probability of appending the symbol~$\sigma\in\Sigma$ at a state~$q$ is~$V(\delta(q, \sigma))/V(q)$, 
 which is the fraction of the paths in the MNFA from~$q$ to~$q_{n, \ell}$ that pass through some
  ~$q'\in\delta(q, \sigma)$.
 This procedure uniformly samples a private output word of length~$n$ with~$\ell$ errors from the set of all words of length~$n$ with~$\ell$ errors. 
 Figure~\ref{fig:MNFA} provides an illustration of Definition~\ref{def:MNFA} and Algorithm~\ref{algo:MNFA}. To compute the transition function~$\delta$ for the MNFA, we modify the Levenshtein automaton construction in~\cite{schulz2002fast} to use the Hamming distance.

 The implementation of the permute-and-flip mechanism is as follows. 
 First, we randomly select a Hamming distance between the sensitive input word~$w$ and the private output word~$w'$, which 
 is~$\ell$. Then, we construct a MNFA and compute a policy which, when executed on the MNFA, uniformly samples from the set of all words with Hamming distance~$\ell$ from the sensitive input word. To formally state the mechanism, we use
 \begin{equation}
 M(n, m, \ell) := \binom{n}{\ell}(m-1)^{\ell} 
 \end{equation}
 and
\begin{equation}
     \mathscr{L} := \left[0, \ones^{M(n, m, 1)}, \cdots, n \ones^{M(n, m, n)}\right]^T\in \mathbb{R}^{|\Sigma^n|}.
\end{equation}
Here,~$M(n, m, \ell)$ is the number of words of length~$n$ on an alphabet of size~$m$ that 
are
Hamming distance~$\ell$ from the sensitive input word~$w$.
The vector~$\mathscr{L}$ contains the Hamming distance to the sensitive input word~$w$ from every 
candidate output word~$w'\in\Sigma^n$. 
However, the construction of~$\mathscr{L}$ 
requires only elementary combinatorial terms and
does not require the explicit enumeration of output words to compute these Hamming distances. 
Additionally, let~$i{(\ell)}$ denote the smallest index~$j$ of~$\mathscr{L}$ such that~$\mathscr{L}_j = \ell$.

\begin{mechanism}[Solution to Problem 1]\label{mech:prob1}
Fix a probability space~$(\Omega, \mathcal{F}, \mathbb{P})$ and an adjacency parameter~$b\in\mathbb{N}^+$. Let an alphabet~$\Sigma$ and sensitive input word~$w = \sigma_1\cdots\sigma_n\in\Sigma^n$ be given, let~$N = |\Sigma^n|$, and let~$I = [N]$. The mechanism~$\mathcal{M}_{1}: \Sigma^n\times \Omega \to \Sigma^n$ selects a private output word by:
(i) drawing a Hamming distance~$\ell$ from the distribution 
    \begin{equation}\label{eq:mech1_pmf}
    \prob{\ell; w, b} = \exp\left(-\frac{\epsilon \ell}{2b}\right)M(n, m, \ell)\Phi(b, \mathscr{L}, \ell),
\end{equation}
where
\begin{equation}\label{eq:phi}
    \Phi(b, \mathscr{L}, \ell) = \sum_{k=0}^N\frac{(-1)^k}{k+1}\sum_{\substack{G\subseteq I\\|G| = k\\ i(\ell)\not\in G}} \prod_{j\in G} \exp\left(-\frac{\epsilon \mathscr{L}_j}{2b}\right),
\end{equation}
then (ii) building a modified Hamming distance NFA~$A_{w, b} = (Q_{w, b}, \Sigma, q_{0,0}, \delta, \{q_{n, \ell}\})$, and finally (iii) using Algorithm~\ref{algo:MNFA} to synthesize a policy. A private output word~$w'= \sigma_1'\cdots\sigma_n'\in\mathcal{L}(A_{w, b})$ is generated by running~$A_{w, b}$ once. 
\end{mechanism}
 
The selection of the smallest index~$j$ for~$i(\ell)$ is arbitrary because using any index~$j$ where~$\mathscr{L}_j = \ell$ yields an identical value of~$\Phi(b, \mathscr{L}, \ell)$. We use the smallest 
such index 
for concreteness. 
Also, we observe 
that evaluating~$\Phi(b, \mathscr{L}, \ell)$ from~\eqref{eq:phi} is equivalent to evaluating~$\Psi(b, L, \hat{w}(\ell))$ from~\eqref{eq:psi}, where~$\hat{w}(\ell)$ is an arbitrary word such that~$d(w, \hat{w}(\ell)) = \ell$. This equivalence allows us to compute the probability of selecting a private output word with error~$\ell$ by operating only on the vector of Hamming distances~$\mathscr{L}$, and not evaluating the Hamming distance between a specific private output word and the sensitive input word. As a result, we may implement the permute-and-flip mechanism without explicitly evaluating any of the~$m^n$ Hamming distances between the sensitive input word and all possible private output words. 

\begin{theorem}\label{thm:mech1_priv}
    Fix a probability space~$(\Omega, \mathcal{F}, \mathbb{P})$. Given an adjacency parameter~$b\in\mathbb{N}^+$, a privacy parameter~$\epsilon\geq 0$, and a sensitive word~$w\in\Sigma^n$, Mechanism~\ref{mech:prob1} provides word~$\epsilon$-differential privacy to~$w$. 
\end{theorem}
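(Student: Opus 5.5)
The approach is to show that the output distribution of Mechanism~\ref{mech:prob1} coincides exactly with the permute-and-flip distribution of Definition~\ref{def:PF} with $L=\Sigma^n$ and $\Delta u$ replaced by $b$. Privacy then follows from the guarantee of~\cite{mckenna2020permute}: permute-and-flip run with any upper bound on the sensitivity is $\epsilon$-differentially private. Lemma~\ref{lem:sensitivity} gives $\Delta u\le b$, so using $b$ in the exponent is valid. (It over-estimates the sensitivity, which can only strengthen privacy.)

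\textbf{Step 1: the output probability factors.} Fix $w'\in\Sigma^n$ and set $\ell=d(w,w')$. The mechanism outputs $w'$ only if it first draws this $\ell$. The policy from Algorithm~\ref{algo:MNFA} then samples uniformly from $\mathcal{L}(A_{w,\ell})$, which has $M(n,m,\ell)$ elements. Hence
\[
\prob{\mathcal{M}_1(w)=w'}=\prob{\ell;w,b}\cdot\frac{1}{M(n,m,\ell)}=\exp\left(-\frac{\epsilon \ell}{2b}\right)\Phi(b,\mathscr{L},\ell).
\]
I would briefly justify uniformity. $V(q)$ counts the accepting paths from $q$, so the product of the ratios $V(q_{i+1})/V(q_i)$ along any accepting path telescopes to $1/V(q_{0,0})=1/M(n,m,\ell)$. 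I would also note that the MNFA here must be built with accepting state $q_{n,\ell}$.

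\textbf{Step 2: $\Phi$ equals $\Psi$.} I would show $\Phi(b,\mathscr{L},\ell)=\Psi(b,\Sigma^n,w')$. Fix any bijection $\pi:I\to\Sigma^n$ under which $\mathscr{L}_j=d(w,\pi(j))$. Such a bijection exists because $\mathscr{L}$ lists each distance $\ell$ exactly $M(n,m,\ell)$ times, matching the Hamming-shell sizes. Choose $\pi$ with $\pi(i(\ell))=w'$, which is possible since indices with the same $\mathscr{L}$ value are interchangeable. Then subsets $G\subseteq I$ with $i(\ell)\notin G$ correspond to subsets $S\subseteq\Sigma^n$ with $w'\notin S$, and the products match term by term. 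Because $\Psi(b,\Sigma^n,w')$ depends on $w'$ only through its distance, the remark after the mechanism (independence of the choice of index) is exactly this invariance. With Step 1, the output law of $\mathcal{M}_1(w)$ is precisely $\mathcal{M}_{PF}(w)$ with utility $-d(w,\cdot)$ and sensitivity parameter $b$.

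\textbf{Step 3: the privacy bound.} Invoke the permute-and-flip privacy theorem of~\cite{mckenna2020permute}. If one prefers a self-contained argument, recall its proof idea. Write $q_r=\exp(\epsilon u(w,r)/(2b))$. Permute-and-flip outputs $w'$ with probability $q_{w'}\int_0^1\prod_{r\ne w'}(1-q_r t)\,dt$, and expanding the product recovers~\eqref{eq:psi}. For an adjacent $v$, every utility shifts by at most $b$, so each exponent moves by at most $\epsilon/2$. McKenna and Sheldon's coupling argument on the random permutation and coin flips then yields the factor $e^{\epsilon}$.

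The main obstacle is Step 2. It requires the bookkeeping that the vector $\mathscr{L}$ faithfully encodes the multiset of distances, so that the index-set sum in~\eqref{eq:phi} reproduces the word-set sum in~\eqref{eq:psi}. Once that identification is made, privacy is inherited directly.
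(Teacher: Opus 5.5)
Your proposal is correct and follows the paper's proof. Both factor $\mathbb{P}(\mathcal{M}_1(w)=w')$ as the probability of drawing $\ell=d(w,w')$ times the uniform probability $1/M(n,m,\ell)$, identify $\Phi(b,\mathscr{L},\ell)$ with $\Psi(b,\Sigma^n,w')$, and inherit privacy from \cite[Theorem 1]{mckenna2020permute}; your telescoping argument for uniform sampling, explicit index-to-word bijection behind $\Phi=\Psi$, and use of $\Delta u\le b$ to justify writing $b$ in the exponent supply details that the paper only asserts.
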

    \noindent\emph{Proof.} See Appendix~\ref{prf:thm1}. \qed

Mechanism~\ref{mech:prob1} allows for more efficient sampling of private output words than
a direct implementation of 
Definition~\ref{def:PF}.
This improvement in efficiency is attained  
because we restrict the set of possible private output words to only those with~$\ell$ errors,  without needing to enumerate those words \emph{a priori}. 

\begin{figure*}[!t]
    \centering
    \begin{subfigure}{0.3\linewidth}
        \centering
%
%
\definecolor{chocolate2267451}{RGB}{226,74,51}
\definecolor{dimgray85}{RGB}{85,85,85}
\definecolor{gainsboro229}{RGB}{229,229,229}
\definecolor{lightgray204}{RGB}{204,204,204}
\definecolor{steelblue52138189}{RGB}{52,138,189}
\definecolor{black}{RGB}{0, 0, 0}
\definecolor{GTblue}{RGB}{0, 48, 87}
\definecolor{GTgold}{RGB}{179, 163, 105}
\definecolor{UFOrange}{RGB}{250, 70, 22}
\definecolor{UFblue}{RGB}{0, 33, 165}
\begin{tikzpicture}

\begin{axis}[%
width=0.25\figW,
height=0.65\figH,
axis background/.style={fill=gainsboro229},
axis line style={white},
scale only axis,
xlabel=\textcolor{black}{Privacy Strength, $\epsilon$},
xtick style={color=dimgray85},
x grid style={white},
yminorticks=true,
y grid style={white},
ylabel=\textcolor{black}{$\E{\ell}$},
xmajorgrids,
ymajorgrids,
yminorgrids,
tick align=outside,
tick pos=left,
legend pos = north east,
legend style={nodes={scale=0.75, transform shape}},
]
\addplot [color=UFblue, ultra thick]
  table[row sep=crcr]{%
0.1	2.43549276444719\\
0.302040816326531	2.3053788292915\\
0.504081632653061	2.17590860762373\\
0.706122448979592	2.04747713148513\\
0.908163265306122	1.92042592183956\\
1.11020408163265	1.79502288689028\\
1.31224489795918	1.67145578792973\\
1.51428571428571	1.54987318509186\\
1.71632653061225	1.43049584659039\\
1.91836734693878	1.31375364892068\\
2.12040816326531	1.20035282272755\\
2.32244897959184	1.09121399568818\\
2.52448979591837	0.987326174388964\\
2.7265306122449	0.889581106065586\\
2.92857142857143	0.79864885103903\\
3.13061224489796	0.714927953592282\\
3.33265306122449	0.638539799949239\\
3.53469387755102	0.569369223173893\\
3.73673469387755	0.507120530406608\\
3.93877551020408	0.451373142345834\\
4.14081632653061	0.401634985463621\\
4.34285714285714	0.357378966446369\\
4.54489795918367	0.318041232594676\\
4.7469387755102	0.283216745179981\\
4.94897959183673	0.252318036691519\\
5.15102040816326	0.2242555901914\\
5.3530612244898	0.200670357618772\\
5.55510204081633	0.179155980362891\\
5.75714285714286	0.160070531632939\\
5.95918367346939	0.143127807058992\\
6.16122448979592	0.128074682269675\\
6.36326530612245	0.114688541822153\\
6.56530612244898	0.102773692442789\\
6.76734693877551	0.0921582772533679\\
6.96938775510204	0.0826915932224887\\
7.17142857142857	0.0742417791997859\\
7.3734693877551	0.0666926211643125\\
7.57551020408163	0.0599424444108034\\
7.77755102040816	0.0539015961621792\\
7.97959183673469	0.048491311037979\\
8.18163265306122	0.0436421685766874\\
8.38367346938776	0.0392930346848112\\
8.58571428571429	0.0353897477967412\\
8.78775510204082	0.0318844888587031\\
8.98979591836735	0.0287349722779944\\
9.19183673469388	0.0259035433307559\\
9.39387755102041	0.0233569546247759\\
9.59591836734694	0.0210654587761842\\
9.79795918367347	0.0190026109258217\\
10	0.01714519434789\\
};
\addlegendentry{Mechanism~\ref{mech:prob1}}
\addplot [color=red, dashed, ultra thick]
  table[row sep=crcr]{%
0.1	2.43751301757895\\
0.302040816326531	2.31158245955396\\
0.504081632653061	2.18660624282469\\
0.706122448979592	2.06320139740531\\
0.908163265306122	1.94195380353289\\
1.11020408163265	1.82340754312836\\
1.31224489795918	1.70805593369655\\
1.51428571428571	1.59633452006134\\
1.71632653061225	1.48861617480401\\
1.91836734693878	1.38520833406422\\
2.12040816326531	1.28635228181919\\
2.32244897959184	1.19222430091279\\
2.52448979591837	1.10293843806486\\
2.7265306122449	1.01855058489812\\
2.92857142857143	0.939063556958632\\
3.13061224489796	0.864432854902975\\
3.33265306122449	0.794572812190433\\
3.53469387755102	0.729362866800179\\
3.73673469387755	0.66865373572559\\
3.93877551020408	0.612273315793993\\
4.14081632653061	0.560032179016854\\
4.34285714285714	0.511728572412457\\
4.54489795918367	0.467152869203846\\
4.7469387755102	0.426091449452082\\
4.94897959183673	0.388330013193157\\
5.15102040816326	0.353656348172716\\
5.3530612244898	0.321862587830824\\
5.55510204081633	0.292747003998838\\
5.75714285714286	0.266115383640233\\
5.95918367346939	0.241782040709096\\
6.16122448979592	0.219570513572002\\
6.36326530612245	0.199313996110941\\
6.56530612244898	0.180855547162542\\
6.76734693877551	0.164048118810725\\
6.96938775510204	0.148754439593552\\
7.17142857142857	0.134846784177729\\
7.3734693877551	0.122206656686465\\
7.57551020408163	0.110724410765633\\
7.77755102040816	0.100298825717227\\
7.97959183673469	0.0908366546576036\\
8.18163265306122	0.0822521576828112\\
8.38367346938776	0.0744666304364097\\
8.58571428571429	0.0674079362557486\\
8.78775510204082	0.0610100481915544\\
8.98979591836735	0.0552126056198254\\
9.19183673469388	0.0499604888597477\\
9.39387755102041	0.045203414142513\\
9.59591836734694	0.0408955504110743\\
9.79795918367347	0.036995158740084\\
10	0.0334642546214243\\
};
\addlegendentry{Theorem~\ref{thm:accuracy_bounds} (UB)}
\addplot [color=UFOrange, dotted, ultra thick]
  table[row sep=crcr]{%
0.1	2.3933185068274\\
0.302040816326531	2.25540739338091\\
0.504081632653061	2.11610243253527\\
0.706122448979592	1.97581761975735\\
0.908163265306122	1.83497931951714\\
1.11020408163265	1.69402744006015\\
1.31224489795918	1.55341579566419\\
1.51428571428571	1.4136110469092\\
1.71632653061225	1.2750898069242\\
1.91836734693878	1.13833376609807\\
2.12040816326531	1.00382297157621\\
2.32244897959184	0.872027653138985\\
2.52448979591837	0.743399175486708\\
2.7265306122449	0.618360796181374\\
2.92857142857143	0.497298914032214\\
3.13061224489796	0.380555415842113\\
3.33265306122449	0.268421592323439\\
3.53469387755102	0.161133923736474\\
3.73673469387755	0.058871858891152\\
3.93877551020408	-0.0382424505912745\\
4.14081632653061	-0.130142630467829\\
4.34285714285714	-0.216815042029824\\
4.54489795918367	-0.29829453650688\\
4.7469387755102	-0.37465934331738\\
4.94897959183673	-0.446025420198367\\
5.15102040816326	-0.512540562653271\\
5.3530612244898	-0.574378526345447\\
5.55510204081633	-0.631733365927596\\
5.75714285714286	-0.684814142825706\\
5.95918367346939	-0.733840106651852\\
6.16122448979592	-0.779036412721907\\
6.36326530612245	-0.820630402901934\\
6.56530612244898	-0.858848449060452\\
6.76734693877551	-0.893913337461063\\
6.96938775510204	-0.926042157773506\\
7.17142857142857	-0.955444651077357\\
7.3734693877551	-0.982321966277622\\
7.57551020408163	-1.0068657727738\\
7.77755102040816	-1.02925767814423\\
7.97959183673469	-1.0496689022713\\
8.18163265306122	-1.06826016312001\\
8.38367346938776	-1.08518173380305\\
8.58571428571429	-1.10057363524748\\
8.78775510204082	-1.11456593345207\\
8.98979591836735	-1.12727911480281\\
9.19183673469388	-1.13882451707724\\
9.39387755102041	-1.14930479754813\\
9.59591836734694	-1.15881442296165\\
9.79795918367347	-1.16744016911204\\
10	-1.17526162027871\\
};
\addlegendentry{Theorem~\ref{thm:accuracy_bounds} (LB)}

\end{axis}
\end{tikzpicture}%
        \caption{}
    \end{subfigure}
    \hfill 
    \begin{subfigure}{0.3\textwidth}
        \centering
%
%
\definecolor{chocolate2267451}{RGB}{226,74,51}
\definecolor{dimgray85}{RGB}{85,85,85}
\definecolor{gainsboro229}{RGB}{229,229,229}
\definecolor{lightgray204}{RGB}{204,204,204}
\definecolor{steelblue52138189}{RGB}{52,138,189}
\definecolor{black}{RGB}{0, 0, 0}
\definecolor{GTblue}{RGB}{0, 48, 87}
\definecolor{GTgold}{RGB}{179, 163, 105}
\definecolor{UFOrange}{RGB}{250, 70, 22}
\definecolor{UFblue}{RGB}{0, 33, 165}
\begin{tikzpicture}

\begin{axis}[%
width=0.25\figW,
height=0.65\figH,
axis background/.style={fill=gainsboro229},
axis line style={white},
scale only axis,
xlabel=\textcolor{black}{Word length, $n$},
xtick style={color=dimgray85},
x grid style={white},
yminorticks=true,
y grid style={white},
xmajorgrids,
ymajorgrids,
yminorgrids,
tick align=outside,
tick pos=left,
legend pos = north east,
]
\addplot [color=UFblue, ultra thick]
  table[row sep=crcr]{%
1	0.0410424666030647\\
2	0.0860389860348599\\
3	0.135044105171023\\
4	0.188071146445207\\
5	0.245086526837725\\
6	0.306014809476947\\
7	0.370720319617681\\
8	0.439022018004397\\
9	0.51068931279126\\
10	0.585443773232246\\
11	0.662967010775305\\
12	0.742906647720533\\
13	0.824886873636226\\
14	0.908516702982929\\
15	0.993407852864674\\
16	1.07917948451989\\
17	1.16547422043473\\
18	1.25197076830195\\
19	1.33838489183453\\
20	1.42448334141638\\
};
\addplot [color=red, dashed, line width=2.0pt]
  table[row sep=crcr]{%
1	0.0758581800212436\\
2	0.151716360042487\\
3	0.227574540063731\\
4	0.303432720084974\\
5	0.379290900106218\\
6	0.455149080127461\\
7	0.531007260148705\\
8	0.606865440169948\\
9	0.682723620191192\\
10	0.758581800212436\\
11	0.834439980233679\\
12	0.910298160254923\\
13	0.986156340276166\\
14	1.06201452029741\\
15	1.13787270031865\\
16	1.2137308803399\\
17	1.28958906036114\\
18	1.36544724038238\\
19	1.44130542040363\\
20	1.51716360042487\\
};
\addplot [color=UFOrange, ultra thick, dotted]
  table[row sep=crcr]{%
1	-0.155177274973446\\
2	-0.275302691674337\\
3	-0.364364705365052\\
4	-0.425948362031569\\
5	-0.463273550750353\\
6	-0.47922977394933\\
7	-0.476407743733518\\
8	-0.45712808572905\\
9	-0.423467407462462\\
10	-0.377281965943141\\
11	-0.320229148679554\\
12	-0.253786963672889\\
13	-0.179271716847082\\
14	-0.0978540397563274\\
15	-0.0105734161358471\\
16	0.0816486571849135\\
17	0.177996751409595\\
18	0.277750716238875\\
19	0.380275916918936\\
20	0.485014404212767\\
};

\end{axis}
\end{tikzpicture}%
        \caption{}
    \end{subfigure}
    \hfill
    \begin{subfigure}{0.3\textwidth}
        \centering
%
%
\definecolor{chocolate2267451}{RGB}{226,74,51}
\definecolor{dimgray85}{RGB}{85,85,85}
\definecolor{gainsboro229}{RGB}{229,229,229}
\definecolor{lightgray204}{RGB}{204,204,204}
\definecolor{steelblue52138189}{RGB}{52,138,189}
\definecolor{black}{RGB}{0, 0, 0}
\definecolor{GTblue}{RGB}{0, 48, 87}
\definecolor{GTgold}{RGB}{179, 163, 105}
\definecolor{UFOrange}{RGB}{250, 70, 22}
\definecolor{UFblue}{RGB}{0, 33, 165}
\begin{tikzpicture}

\begin{axis}[%
width=0.25\figW,
height=0.65\figH,
axis background/.style={fill=gainsboro229},
axis line style={white},
scale only axis,
xlabel=\textcolor{black}{Alphabet size, $m$},
xtick style={color=dimgray85},
x grid style={white},
yminorticks=true,
y grid style={white},
xmajorgrids,
ymajorgrids,
yminorgrids,
tick align=outside,
tick pos=left,
legend pos = north east,
]
\addplot [color=UFblue, ultra thick]
  table[row sep=crcr]{%
2	0.245086526837725\\
3	0.538508919567994\\
4	0.844981604448407\\
5	1.13376472679817\\
6	1.38883005039343\\
7	1.608675798611\\
8	1.79869191837619\\
9	1.96532349866505\\
10	2.1135136011154\\
};
\addplot [color=red, dashed, line width=2.0pt]
  table[row sep=crcr]{%
2	0.379290900106218\\
3	0.705094606612051\\
4	0.987979974754094\\
5	1.23590344285416\\
6	1.45496922956442\\
7	1.64993867534546\\
8	1.82458026947381\\
9	1.98191561845045\\
10	2.12439574317743\\
};
\addplot [color=UFOrange, ultra thick, dotted]
  table[row sep=crcr]{%
2	-0.463273550750353\\
3	0.120536079200013\\
4	0.57218862337674\\
5	0.93365521112927\\
6	1.2310146038355\\
7	1.48115758939824\\
8	1.69543881122727\\
9	1.8817504142339\\
10	2.0457445020015\\
};

\end{axis}
\end{tikzpicture}%
        \caption{}
    \end{subfigure}
    \caption{The upper and lower bounds for the expected error~$\E{\ell}$ from Theorem~\ref{thm:accuracy_bounds} with (a) varying~$\epsilon\in[0.1, 10]$ and fixed~$n = 5$ and~$m = 2$, (b) varying~$n\in\{2, 3,\ldots, 20\}$ 
    and fixed~$\epsilon = 5$ and~$m = 2$, and (c) varying~$m\in\{2, 3,\ldots, 10\}$ and fixed~$\epsilon = 5$ and~$n = 5$. The bounds are tightest with large~$m$ and~$n$, and small~$\epsilon$.}
    \label{fig:bound_comparisons}
\end{figure*}
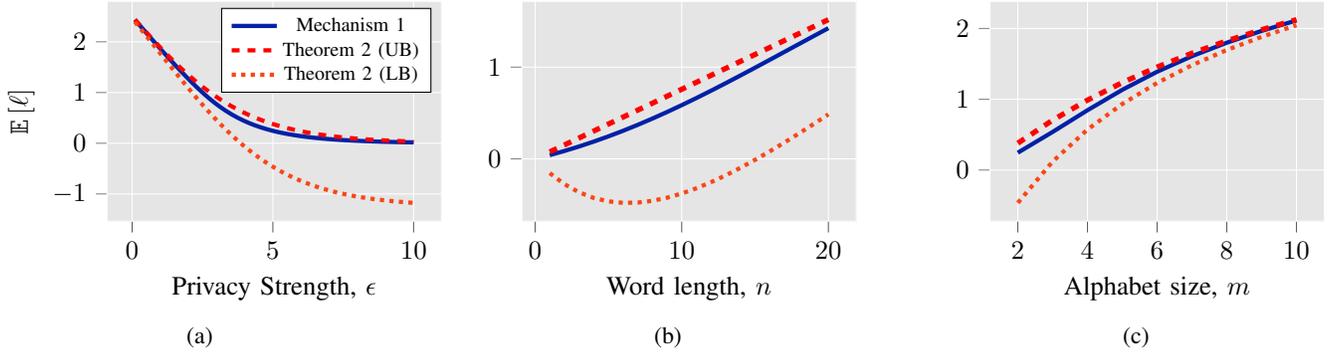

\subsection{Mechanism Accuracy}
In this subsection, we solve Problem~\ref{prob:accuracy}. To quantify the accuracy of the mechanism, we develop bounds on the expected number of errors in a private output word~$w'$ as a function of the adjacency parameter~$b$, the privacy parameter~$\epsilon$, the word length~$n$, and the alphabet size~$m$.

\begin{theorem}[Solution to Problem~\ref{prob:accuracy}]\label{thm:accuracy_bounds}
    Consider Mechanism~\ref{mech:prob1}. 
    Then the expected error~$\E{\ell}$ is bounded according to 
    \begin{equation}
        \frac{nC}{1+C} -\frac{n(\Phi(b,\mathscr{L}, 0)-\Phi(b,\mathscr{L}, n))}{4\Eem{\Phi(b, \mathscr{L}, \ell)}}\leq \E{\ell} \leq \frac{nC}{1+C}, 
    \end{equation}
    where~$\Phi$ is from~\eqref{eq:phi},
    $C = (m-1)e^{-\frac{\epsilon}{2b}}$, 
    and~$\Eem{\cdot}$ is the expectation under the distribution defined by 
    \begin{equation}
        \probem{\ell} = \frac{1}{Z}\exp\left(-\frac{\epsilon\ell}{2b}\right)\binom{n}{\ell}(m-1)^{\ell}.
    \end{equation}
    For all~$t>0$, we have 
        $\prob{|\ell-\E{\ell}|\geq t}\leq 2\exp(-\frac{2t^2}{n^2}).$
\end{theorem}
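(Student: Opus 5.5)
We will compare Mechanism~\ref{mech:prob1} with the exponential mechanism, whose distribution over Hamming distances is exactly $\probem{\ell}$. The binomial theorem gives $Z=\sum_{\ell}\binom{n}{\ell}C^\ell=(1+C)^n$, so $\probem{\ell}=\binom{n}{\ell}C^\ell/(1+C)^n$. Hence $\ell$ is $\Binom(n,C/(1+C))$ under $\mathbb{P}_{Exp}$, and $\Eem{\ell}=nC/(1+C)$. With $a_\ell:=\exp(-\epsilon\ell/(2b))M(n,m,\ell)$, equation~\eqref{eq:mech1_pmf} reads $\prob{\ell;w,b}=a_\ell\Phi(b,\mathscr{L},\ell)$, while $\probem{\ell}=a_\ell/Z$. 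Summing the first over $\ell$ gives $\sum_\ell a_\ell\Phi_\ell=1$, so $Z\,\Eem{\Phi}=1$. Therefore
\begin{equation}
\prob{\ell;w,b}=\probem{\ell}\,\frac{\Phi(b,\mathscr{L},\ell)}{\Eem{\Phi(b,\mathscr{L},\ell)}},
\qquad
\E{\ell}=\Eem{\ell}+\frac{\mathrm{Cov}_{Exp}(\ell,\Phi(b,\mathscr{L},\ell))}{\Eem{\Phi(b,\mathscr{L},\ell)}}.
\end{equation}
Both bounds now reduce to controlling this covariance.

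\textbf{Upper bound.} We show $\Phi_\ell:=\Phi(b,\mathscr{L},\ell)$ is nonincreasing in $\ell$. Using $\frac{1}{k+1}=\int_0^1 t^k\,dt$, the alternating subset sum becomes $\Phi_\ell=\int_0^1\prod_{j\neq i(\ell)}(1-t e^{-\epsilon\mathscr{L}_j/(2b)})\,dt$, which is the standard integral form of permute-and-flip. Removing the factor for index $i(\ell)$ and keeping all others is equivalent to dividing the full product by $(1-te^{-\epsilon\ell/(2b)})$. As $\ell$ grows this removed factor increases, so the integrand decreases pointwise. Thus $\Phi_\ell$ is decreasing, while $\ell$ is increasing. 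By Chebyshev's association inequality, a nondecreasing and a nonincreasing function of the same random variable have nonpositive covariance. Hence $\E{\ell}\le nC/(1+C)$.

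\textbf{Lower bound.} We bound $|\mathrm{Cov}_{Exp}(\ell,\Phi_\ell)|$ by the product of the ranges divided by $4$. This is the covariance form of Popoviciu's inequality: $|\mathrm{Cov}(X,Y)|\le \tfrac14(\max X-\min X)(\max Y-\min Y)$ for bounded variables, obtained via Cauchy--Schwarz and $\mathrm{Var}(X)\le(\text{range})^2/4$. The range of $\ell$ is $n$, and by monotonicity the range of $\Phi$ is $\Phi_0-\Phi_n$. This gives exactly the stated subtracted term. Checking that the integral representation of $\Phi_\ell$ is valid is the main technical step, but it is a routine expansion of the product.

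\textbf{Concentration.} The variable $\ell$ takes values in $[0,n]$. We would apply Hoeffding's inequality for a single bounded variable, which gives $\prob{|\ell-\E{\ell}|\ge t}\le 2\exp(-2t^2/n^2)$; this is the statement with $N=1$ and range $n$. Equivalently, one can use the sub-Gaussian bound from Hoeffding's lemma together with Chernoff's method. The main obstacle is the monotonicity of $\Phi_\ell$; everything else is standard once that is established.
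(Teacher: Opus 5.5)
Your proposal is correct and follows the paper's proof essentially step for step. You use the same reweighting $\prob{\ell}=\probem{\ell}\,\Phi/\Eem{\Phi}$, the same covariance decomposition, Chebyshev's inequality for the upper bound, the Gr\"uss-type range bound for the lower bound (you derive it via Cauchy--Schwarz and Popoviciu), and Hoeffding for concentration. The only real difference is that you prove the monotonicity of $\Phi(b,\mathscr{L},\ell)$ yourself, using the representation $\frac{1}{k+1}=\int_0^1 t^k\,dt$, whereas the paper simply cites Lemma~4 of McKenna and Sheldon. Your pointwise comparison is valid because every factor $1-te^{-\epsilon\mathscr{L}_j/(2b)}$ lies in $[0,1]$.
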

 \noindent\emph{Proof.}   See Appendix~\ref{prf:thm2}. \qed

The upper bound in Theorem~\ref{thm:accuracy_bounds} exactly matches the expression for the expected
error for the exponential mechanism in~\cite{chen2023differentialsymbolic}. Previously, 
\cite[Theorem 2]{mckenna2020permute} proved that the permute-and-flip mechanism is never worse than the exponential mechanism,
and we have recovered that fact in explicit form 
in
the upper bound in Theorem~\ref{thm:accuracy_bounds}. 


Figure~\ref{fig:bound_comparisons} illustrates the bounds in Theorem~\ref{thm:accuracy_bounds} with varying~$\epsilon$, word length~$n$, and alphabet size~$m$. 
As~$n$ and~$m$ increase, the error from Mechanism~\ref{mech:prob1} more closely resembles that of~\cite[Mechanism 1]{chen2023differentialsymbolic}. 

\section{Extension to Markov Chains}\label{sec:MC}
In this section, we solve Problem~\ref{prob:markov_chains}, and we modify Mechanism~\ref{mech:prob1} to ensure private output words are feasible trajectories for a given Markov chain. 
We do this by selecting a private output word using a product modified Hamming distance NFA.

\begin{definition}[Product Modified Hamming Distance NFA; \cite{chen2023differentialsymbolic}]\label{def:PMNFA}
    Let a Markov chain~$(\mathcal{Y}, T, y_0$) be given.    
    For a sequence of states~$x\in\mathcal{Y}^n$ and a distance~$\ell \in \mathbb{N}$, let~$A_{x, \ell} = (Q_{x, \ell}, \Sigma, q^0, \delta, F_{n, \ell})$ be a MNFA. Then the Product Modified Hamming Distance NFA (P-MNFA) is an MNFA~$A_{x, \ell, \mathcal{Y}} = (Q_{\mathcal{Y}}, \Sigma, q_{\mathcal{Y}}^0, \delta_{\mathcal{Y}}, F_{\mathcal{Y}})$, where
    \begin{multline}
       \!\!\!\!\! Q_{\mathcal{Y}} = Q_{x, \ell}\times \mathcal{Y}, \quad \! \delta_{\mathcal{Y}}: Q_{x, \ell}\times \mathcal{Y} \times \Sigma \to 2^{Q_{\mathcal{Y}}}, \quad \!\! q_{\mathcal{Y}}^0 = (q_0, y_0),\\ \text{and}~ F_{\mathcal{Y}} = \{(q_f, y)\in Q_{\mathcal{Y}}\mid q_f\in F_{n, \ell}, y\in\mathcal{Y}\},
    \end{multline}
    and for any~$(q', y')\in\delta_{\mathcal{Y}}(q, y, \sigma)$, we have~$\delta(q, \sigma) = q'$ and~$\prob{y'\mid y}>0$. A state~$q_y\in Q_{\mathcal{Y}}$ can transition to another state by a policy~$\mu_y(\cdot, \cdot \mid q_y, y):Q_{x,\ell}\times \mathcal{Y}\to [0, 1]$. And~$\mathcal{L}(A_{x, \ell, \mathcal{Y}})$ is the set of all feasible words of length~$n$ with Hamming from~$x$ equal to~$\ell$. 
\end{definition}
\begin{algorithm}[t]
    \caption{Constructing P-MNFA and Policy Synthesis for Markov chains~\cite{chen2023differentialsymbolic}}
    \label{algo:PMNFA}
    \begin{algorithmic}[1]
    \STATEx \textbf{Inputs}: Sensitive word $w$ with length~$n$, transition function $\delta$, accepting set $\{(q_{n, \ell}, y)\mid y\in \mathcal{Y}\}$
    \STATEx \textbf{Outputs}: Policy~$\mu_y$
    \vspace{1mm}
    \STATE $V(q_{n, \ell}, y) = 1~\text{for all } y\in \mathcal{Y}$ 
    \STATE $CurrQ = \{(q_{n, \ell}, y)\}~\text{for all } y\in \mathcal{Y}$
    \STATE $ActiveQ = \{\}$
    \FOR{$iteration\in [n]$}
    \FOR{$(q', y')\in CurrQ$}
    \FOR{$(q, y)~s.t.~(q', y')\in \delta_{\mathcal{Y}}(q, y, y')$} 
         \STATE $V(q, y) = \sum_{(q'', y'')\mid (q'', y'')\in \delta_{\mathcal{Y}}(q, y, y'') } V(q'', y'')$
         \STATE $\mu_y(q', y' \mid q, y) = \frac{V(q', y')}{V(q, y)}$
         \STATE $ActiveQ = ActiveQ\cup\{(q, y)\}$
         \ENDFOR
    \ENDFOR
    \STATE $CurrQ = ActiveQ$
    \STATE $ActiveQ = \{\}$
    \ENDFOR
    \end{algorithmic}
\end{algorithm}
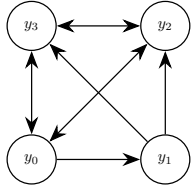
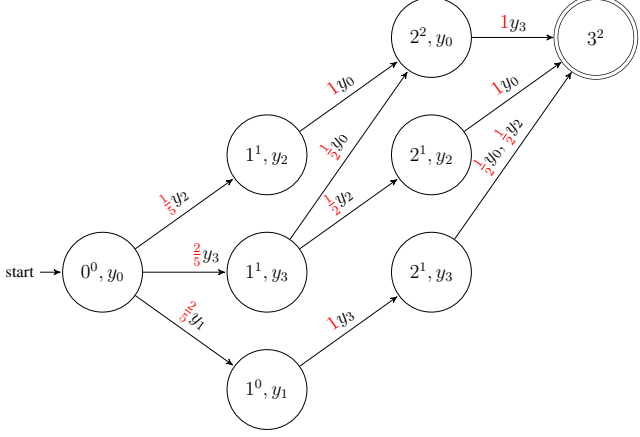
\begin{figure}
    \centering
    \begin{subfigure}{\linewidth}
        \centering
        \resizebox{0.3\linewidth}{!}{
\begin{tikzpicture}[
    >={Stealth[length=4.0mm]}, 
    auto,
    thick, 
    node distance=3cm, 
    every state/.style={
        draw, 
        circle, 
        minimum size=1.1cm, 
        inner sep=0pt
    }
]

    
    \node[state] (s3) at (0,3) {$y_3$};
    \node[state] (s2) at (3,3) {$y_2$};
    \node[state] (s0) at (0,0) {$y_0$};
    \node[state] (s1) at (3,0) {$y_1$};

    
    \draw[<->] (s3) -- (s2);
    
    \draw[<->] (s0) -- (s3);
    
    \draw[<->] (s0) -- (s2);
    
    \draw[->] (s1) -- (s3);

    
    \draw[->] (s0) -- (s1);
    
    \draw[->] (s1) -- (s2);

\end{tikzpicture}
}
        \caption{Markov chain on the state space~$\{y_0, y_1, y_2, y_3\}$. Directed edges represent feasible transitions.
        }
    \end{subfigure}
    \hfill
    \begin{subfigure}{\linewidth}
        \centering
        \resizebox{\linewidth}{!}{
\begin{tikzpicture}[
    ->, 
    >={Stealth[round]}, 
    shorten >=1pt, 
    auto, 
    on grid, 
    node distance=3.5cm and 2.5cm, 
    every state/.style={
        draw,
        circle,
        minimum size=1.7cm, 
        inner sep=1pt, 
        font=\large 
    },
    initial text={start}, 
    edge label/.style={
        font=\large, 
        sloped, 
        text=black 
    }
]


    \node[state, initial] (q00) at (0,0) {$0^0, y_0$};

    \node[state] (q11_s2) at (3.5, 2.5) {$1^1, y_2$};
    \node[state] (q11_s3) at (3.5, 0)   {$1^1, y_3$};
    \node[state] (q10_s1) at (3.5, -2.5) {$1^0, y_1$};

    \node[state] (q22_s0) at (7, 5)   {$2^2, y_0$};
    \node[state] (q21_s2) at (7, 2.5) {$2^1, y_2$};
    \node[state] (q21_s3) at (7, 0)   {$2^1, y_3$};

    \node[state, accepting] (q32) at (10.5, 5) {$3^2$};


    \path
        (q00) edge node[edge label] {$\textcolor{red}{\frac{1}{5}}y_2$} (q11_s2)
        (q00) edge node[edge label, near end] {$\textcolor{red}{\frac{2}{5}}y_3$} (q11_s3)
        (q00) edge node[edge label] {$\textcolor{red}{\frac{2}{5}}y_1$} (q10_s1)

        (q11_s2) edge node[edge label] {$\textcolor{red}{1}y_0$} (q22_s0)
        (q11_s3) edge node[edge label] {$\textcolor{red}{\frac{1}{2}}y_0$} (q22_s0)
        (q11_s3) edge node[edge label] {$\textcolor{red}{\frac{1}{2}}y_2$} (q21_s2)
        (q10_s1) edge node[edge label] {$\textcolor{red}{1}y_3$} (q21_s3)

        (q22_s0) edge node[edge label] {$\textcolor{red}{1}y_3$} (q32)
        (q21_s2) edge node[edge label] {$\textcolor{red}{1}y_0$} (q32)
        (q21_s3) edge
            node[edge label] {$\textcolor{red}{\frac{1}{2}}y_0, \textcolor{red}{\frac{1}{2}}y_2$}
        (q32);

\end{tikzpicture}
}
        \caption{Product MNFA for the Markov chain in (a) for a trajectory of length~$n = 3$ 
        and a Hamming distance of~$\ell = 2$ for the word~$y_1y_2y_3$, with an initial state~$y_0$.         
        Each circle represents a state in the MNFA      
        where the base of the number in the circle is the current length of the word and the superscript is the number of errors. The Markov chain state~$y_i$ is the symbol most recently appended
        to the output word that is being constructed.        
        }
    \end{subfigure}
    \caption{A P-MNFA (b) for a four-state Markov chain (a).}
    \label{fig:PMNFA_example}
\end{figure}

Algorithm~\ref{algo:PMNFA} is used to construct a P-MNFA and synthesize the policy~$\mu_y$. Figure~\ref{fig:PMNFA_example} illustrates Definition~\ref{def:PMNFA} and Algorithm~\ref{algo:PMNFA} with an example Markov chain. 
Definition~\ref{def:PMNFA} implements the synchronous product of an MNFA and a Markov chain, 
and this construction ensures
that all words generated by running~$A_{x, \ell, \mathcal{Y}}$ are feasible in both the MNFA~$A_{x, \ell}$ and the Markov chain~$(\mathcal{Y}, T, y_0)$. Similar to Mechanism~\ref{mech:prob1}, to formally state our mechanism for Markov chains we define~$N_{\mathcal{Y}^n}(\ell)$ as the number of length~$n$ words in~$\mathcal{Y}^n$
with initial state~$y_0$ and Hamming distance~$\ell$ from the sensitive input word. Then we define 
    $\mathscr{L}^{\mathcal{Y}^n} = \left[0, \ones^{N_{\mathcal{Y}^n}(1)}, \cdots, n \ones^{N_{\mathcal{Y}^n}(n)}\right]^T\in \mathbb{R}^{|\mathcal{L}(\mathcal{Y}^n)|}.$
The mechanism itself is as follows. 

\begin{mechanism}[Solution to Problem~\ref{prob:markov_chains}]\label{mech:prob3}
    Fix a probability space~$(\Omega, \mathcal{F}, \mathbb{P})$, a Markov chain~$(\mathcal{Y}, T, y_0)$, and an adjacency parameter~$b\!\in\!\mathbb{N}^+$. Let a sensitive input word~$w\! =\! \!y_1\!\cdots\! y_n\!\in\!\mathcal{Y}^n$ be given. The mechanism~$\mathcal{M}_{2}: \mathcal{Y}^n\times \Omega \to \mathcal{Y}^n$ selects a private output word by (i) drawing a Hamming distance~$\ell$ from
    \begin{equation}\label{eq:mech2_prob_ell}
    \prob{\ell; w, b} = \exp\left(-\frac{\epsilon \ell}{2b}\right)N_{\mathcal{Y}^n}(\ell)\Phi(b, \mathscr{L}^{\mathcal{Y}^n}, \ell),
\end{equation}
then (ii) constructing a P-MNFA~$A_{w, \ell, \mathcal{Y}} = (Q_{\mathcal{Y}}, \Sigma, q_{\mathcal{Y}}^0, \delta_{\mathcal{Y}}, F_{\mathcal{Y}})$ and synthesizing a policy~$\mu_{y}$ using~Algorithm~\ref{algo:PMNFA}. A private output word~$w' = s_1'\cdots s_n'\in\mathcal{L}(A_{w, \ell, \mathcal{Y}})$ is generated by running the P-MNFA~$A_{w, \ell, \mathcal{Y}}$ once. 
\end{mechanism}
The Markov chain~$(\mathcal{Y}, T, y_0)$ fixes the initial condition~$y_0$ by definition,
        and private state trajectories for this Markov chain must keep this initial condition to remain
        valid.
        Therefore, privacy does not alter~$y_0$ when we use it below, and we often treat
        a state trajectory as consisting only of the states that come after
        the initial state, e.g., we treat~$y_0y_1y_2y_3$ as~$y_1y_2y_3$ precisely
        because privacy cannot alter~$y_0$. 
The following theorem confirms that Mechanism~\ref{mech:prob3} 
provides~word $\epsilon$-differential privacy.

\begin{theorem}\label{thm:mech2}
    Fix a privacy parameter~$\epsilon\geq 0$ and an adjacency parameter~$b\in\mathbb{N}^+$. Let~$w\in\mathcal{Y}^n$ be a state sequence generated by a Markov chain~$(\mathcal{Y}, T, y_0)$. 
    Then Mechanism~\ref{mech:prob3} provides word~$\epsilon$-differential privacy to~$w$. 
\end{theorem}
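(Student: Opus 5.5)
The plan is to show that Mechanism~\ref{mech:prob3} induces exactly the same output distribution as the permute-and-flip mechanism of Definition~\ref{def:PF} with $L=\mathcal{L}(\mathcal{Y}^n)$. The privacy guarantee of \cite{mckenna2020permute} then transfers directly, using the sensitivity bound of Lemma~\ref{lem:sensitivity}. This mirrors the argument for Theorem~\ref{thm:mech1_priv}, with $\Sigma^n$ replaced by the feasible set $\mathcal{L}(\mathcal{Y}^n)$ and the MNFA replaced by the P-MNFA.

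\textbf{Step 1: two-stage sampling factors correctly.} Fix $w$ and a feasible output $w'$ with $d(w,w')=\ell$. By Definition~\ref{def:PMNFA}, the language $\mathcal{L}(A_{w,\ell,\mathcal{Y}})$ is exactly the set of feasible words at Hamming distance $\ell$ from $w$. This set has cardinality $N_{\mathcal{Y}^n}(\ell)$. Algorithm~\ref{algo:PMNFA} sets $V(q,y)$ equal to the number of accepting paths from $(q,y)$, and transitions with probability $V(q',y')/V(q,y)$. The product of these ratios along any accepting path telescopes to $1/V(q^0_{\mathcal{Y}})=1/N_{\mathcal{Y}^n}(\ell)$. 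Hence the P-MNFA samples uniformly from this set. To make the telescoping valid, I would check that each feasible word corresponds to exactly one accepting run. This holds because the state records the length, the error count, and the last symbol, so the run is determined by the word. It follows that
\[
\prob{\mathcal{M}_2(w)=w'} = \exp\!\left(-\frac{\epsilon\ell}{2b}\right)\Phi(b,\mathscr{L}^{\mathcal{Y}^n},\ell).
\]

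\textbf{Step 2: this equals the permute-and-flip probability.} I would show that $\Phi(b,\mathscr{L}^{\mathcal{Y}^n},\ell)=\Psi(b,\mathcal{L}(\mathcal{Y}^n),w')$. The argument is the observation made after Mechanism~\ref{mech:prob1}. The vector $\mathscr{L}^{\mathcal{Y}^n}$ lists the Hamming distances from $w$ of all words in $\mathcal{L}(\mathcal{Y}^n)$, so there is a bijection between index sets $G$ not containing $i(\ell)$ and subsets $S\subseteq \mathcal{L}(\mathcal{Y}^n)$ not containing $w'$. The summand in~\eqref{eq:psi} depends on $S$ only through the multiset of distances $\{d(w,s)\}$. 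Therefore the two sums agree term by term, with $\Delta u$ replaced by its bound $b$ from Lemma~\ref{lem:sensitivity}. Consequently $\mathcal{M}_2$ coincides in distribution with $\mathcal{M}_{PF}$ on $L=\mathcal{L}(\mathcal{Y}^n)$.

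\textbf{Step 3: invoke permute-and-flip privacy.} By \cite{mckenna2020permute}, permute-and-flip with score scale $\epsilon/(2\Delta)$ is $\epsilon$-differentially private whenever $\Delta$ upper bounds the sensitivity. Lemma~\ref{lem:sensitivity} gives $\Delta u\le b$ over adjacent $w_1,w_2\in L$. Summing the pointwise ratio bound over any $L'\subseteq\mathcal{L}(\mathcal{Y}^n)$ yields Definition~\ref{def:word_dp}. The fixed initial state $y_0$ plays no role, since both inputs and outputs share it.

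\textbf{Main obstacle.} The delicate point is Step 2. There, $\mathscr{L}^{\mathcal{Y}^n}$ is indexed relative to the particular input $w$, and its composition $N_{\mathcal{Y}^n}(\ell)$ depends on $w$ through feasibility. So I must be careful that the identification with $\Psi$ holds for each input separately, and that the comparison for privacy is made at the level of the permute-and-flip distribution over words rather than over $\ell$. I would handle this by never comparing the $\ell$-distributions across $w,v$. Instead, I would compare only the word-level probabilities established in Steps 1 and 2.
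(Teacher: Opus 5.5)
Your proposal is correct and follows the same route as the paper's proof: write $\prob{\mathcal{M}_2(w)=w'}$ as the probability of drawing $\ell$ times $1/N_{\mathcal{Y}^n}(\ell)$, identify $\Phi(b,\mathscr{L}^{\mathcal{Y}^n},\ell)$ with $\Psi(b,\mathcal{L}(\mathcal{Y}^n),w')$, and conclude that $\mathcal{M}_2$ coincides with permute-and-flip on $L=\mathcal{L}(\mathcal{Y}^n)$, which is known to be private. You also spell out points the paper leaves implicit: uniformity of the P-MNFA sampler via telescoping and unique runs, the index bijection behind $\Phi=\Psi$, and why using $b\geq\Delta u$ in place of $\Delta u$ still gives $\epsilon$-privacy.
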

 \noindent\emph{Proof.}   See Appendix~\ref{prf:thm3}.\qed

Mechanism~\ref{mech:prob3} restricts the output space of Mechanism~\ref{mech:prob1} to~$\mathcal{L}(\mathcal{Y}^n)$, i.e., only the state sequences that are feasible 
in the given Markov chain 
from the initial state~$y_0$
can be generated as private output words. 
\section{Numerical Simulations}\label{sec:sims}
In this section we solve Problem~\ref{prob:sims}. We compare the accuracy of Mechanism~\ref{mech:prob3} with 
that of~\cite[Mechanism 3]{chen2023differentialsymbolic} using the Gainesville, Florida Annual Average Daily Traffic (AADT)~\cite{GainesvilleTraffic} data that was also used in~\cite{chen2023differentialsymbolic,fallin2023differential}. We first generate a Markov chain using the traffic data in~\cite{GainesvilleTraffic} and then privatize state trajectories, i.e., words, produced by this Markov chain. Such words can represent, e.g., trips to acquaintances' homes, which may be sensitive. Therefore, we provide word differential privacy to these trajectories. 

To develop the Markov chain model, we divide the roads around the University of Florida into segments, where a road segment is a section of road between two intersections. If a road does not extend past any intersections, it has only a single segment. 
Each of these segments is a state in the Markov chain, and
the resulting Markov chain has~$|\mathcal{Y}| = 43$ states. To compute the transition probabilities, we count the number of times drivers transitioned from one road segment to another, and we divide it by the total number of times drivers transitioned away from the first segment. 

To analyze the expected error, we fix a sensitive input word of length~$n = 14$ shown in Figure~\ref{fig:trajectories}.  Figure~\ref{fig:trajectories} also illustrates example private output words for two values of~$\epsilon$. We find that under strong privacy, namely~$\epsilon = 0.5$, the sampled private output word in Figure~\ref{fig:trajectories}
differs from the sensitive input word in every state but the initial state, while at~$\epsilon = 5$ the sampled private output word is identical to the sensitive input word. 

Next,
we sample~$2,000$ private output words of length~$n = 14$ at~$\epsilon$ values in the range~$\epsilon \in [0.1, 10]$ and compute the average error in the trajectories at each~$\epsilon$. Figure~\ref{fig:error_comp} shows that these private output words incur error
close to their corresponding expectations. The value~$\epsilon = 0.5$ gives Mechanism~\ref{mech:prob3} 
an expected error of~$\E{\ell} = 12.7$. 
Similarly, for~$\epsilon = 5$, the expected error is~$\E{\ell} = 0.28$. 
When trajectories are selected using~\cite[Mechanism 3]{chen2023differentialsymbolic}, we find that for~$\epsilon = 0.5$ 
the expected error is~$\mathbb{E}_{Exp}[\ell] = 12.7$, while at~$\epsilon = 5$ the expected
error of that mechanism is~$\mathbb{E}_{Exp}[\ell] = 0.62$. 
This pattern agrees with 
the trend seen in Figure~\ref{fig:bound_comparisons}, where accuracy 
of the two mechanisms 
is roughly equal under strong privacy, but at~$\epsilon = 5$ we find up to a~$55.7\%$ reduction in error. 
For all~$\epsilon>3$, we find at least a~$25\%$ reduction in error relative to~\cite[Mechanism 3]{chen2023differentialsymbolic} when using  Mechanism~\ref{mech:prob3}, which shows
sustained improvement in the accuracy of our mechanism over the state of the art. 

\begin{figure}
    \centering
    \includegraphics[width=0.9\linewidth]{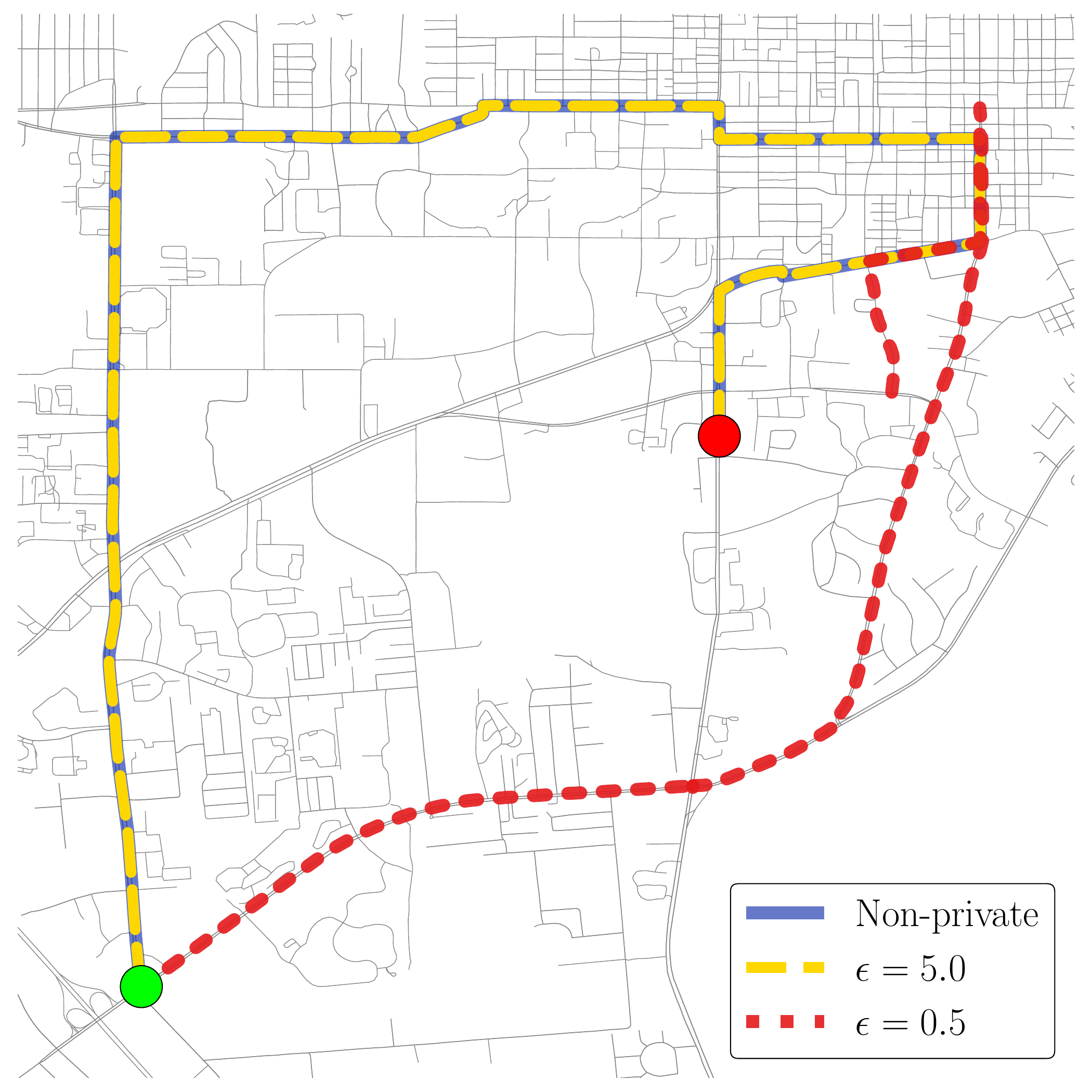}
    \caption{Sample private output words with~$n = 14$ through Gainesville, Florida with SW 34th St as the initial state (green point).     
    At~$\epsilon = 5$, the average error is less than~$1$, and private output words are often close to the sensitive input word. At stronger privacy, i.e.,~$\epsilon =0.5$, the average error approaches~$13$, and private output words often differ in every state from the input word, which is the case 
    for the sampled private output word with~$\epsilon = 0.5$ shown here.    
    }
    \label{fig:trajectories}
\end{figure}

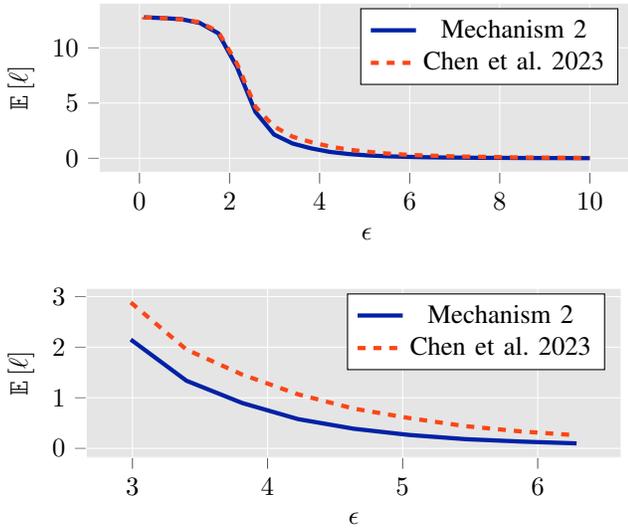
\begin{figure}
\captionsetup[subfigure]{justification=centering}
\begin{subfigure}{0.4\textwidth}
\centering
\hspace*{-1cm}
%
%
\definecolor{chocolate2267451}{RGB}{226,74,51}
\definecolor{dimgray85}{RGB}{85,85,85}
\definecolor{gainsboro229}{RGB}{229,229,229}
\definecolor{lightgray204}{RGB}{204,204,204}
\definecolor{steelblue52138189}{RGB}{52,138,189}
\definecolor{black}{RGB}{0, 0, 0}
\definecolor{GTblue}{RGB}{0, 48, 87}
\definecolor{GTgold}{RGB}{179, 163, 105}
\definecolor{UFOrange}{RGB}{250, 70, 22}
\definecolor{UFblue}{RGB}{0, 33, 165}
\begin{tikzpicture}

\begin{axis}[%
width=0.4\figW,
height=0.5\figH,
axis background/.style={fill=gainsboro229},
axis line style={white},
scale only axis,
xlabel=\textcolor{black}{$\epsilon$},
xtick style={color=dimgray85},
x grid style={white},
yminorticks=true,
y grid style={white},
ylabel=\textcolor{black}{$\E{\ell}$},
xmajorgrids,
ymajorgrids,
yminorgrids,
tick align=outside,
tick pos=left,
legend pos = north east,
]

\addplot [color=UFblue, ultra thick]
  table[row sep=crcr]{%
            0.1  12.763  \\
            0.5125  12.696  \\
            0.925  12.595  \\
            1.3375  12.2572  \\
            1.75  11.318  \\
            2.1625  8.2646  \\
            2.575  4.1926  \\
            2.9875  2.1514  \\
            3.4  1.3366  \\
            3.8125  0.8968  \\
            4.225  0.5788  \\
            4.6375  0.3882  \\
            5.05  0.2646  \\
            5.4625  0.1824  \\
            5.875  0.135  \\
            6.2875  0.0986  \\
            6.7  0.0728  \\
            7.1125  0.0598  \\
            7.525  0.0406  \\
            7.9375  0.0382  \\
            8.35  0.0288  \\
            8.7625  0.023  \\
            9.175  0.0194  \\
            9.5875  0.015  \\
            10.0  0.0104  \\
};
\addlegendentry{Mechanism~\ref{mech:prob3}}
\addplot [color=UFOrange, ultra thick, dashed]
  table[row sep=crcr]{%
            0.1  12.772  \\
            0.1  12.762  \\
            0.5125  12.691  \\
            0.925  12.57  \\
            1.3375  12.3304  \\
            1.75  11.4562  \\
            2.1625  8.5868  \\
            2.575  4.662  \\
            2.9875  2.8848  \\
            3.4  1.9526  \\
            3.8125  1.459  \\
            4.225  1.0694  \\
            4.6375  0.7862  \\
            5.05  0.5982  \\
            5.4625  0.44  \\
            5.875  0.3348  \\
            6.2875  0.2592  \\
            6.7  0.2114  \\
            7.1125  0.157  \\
            7.525  0.1368  \\
            7.9375  0.108  \\
            8.35  0.0792  \\
            8.7625  0.0638  \\
            9.175  0.0584  \\
            9.5875  0.042  \\
            10.0  0.0346  \\
};
\addlegendentry{Chen et al. 2023}



\end{axis}

\end{tikzpicture}
\end{subfigure}
\begin{subfigure}{0.4\textwidth}
\centering
\hspace*{-1cm}
    \definecolor{chocolate2267451}{RGB}{226,74,51}
\definecolor{dimgray85}{RGB}{85,85,85}
\definecolor{gainsboro229}{RGB}{229,229,229}
\definecolor{lightgray204}{RGB}{204,204,204}
\definecolor{steelblue52138189}{RGB}{52,138,189}
\definecolor{black}{RGB}{0, 0, 0}
\definecolor{GTblue}{RGB}{0, 48, 87}
\definecolor{GTgold}{RGB}{179, 163, 105}
\definecolor{UFOrange}{RGB}{250, 70, 22}
\definecolor{UFblue}{RGB}{0, 33, 165}
\begin{tikzpicture}

\begin{axis}[name=inset, xshift=0.17\textwidth,yshift=1.3cm,
width=0.4\figW,
height=0.5\figH,
axis background/.style={fill=gainsboro229},
axis line style={white},
scale only axis,
xlabel=\textcolor{black}{$\epsilon$},
xtick style={color=dimgray85},
x grid style={white},
yminorticks=true,
y grid style={white},
ylabel=\textcolor{black}{$\E{\ell}$},
xmajorgrids,
ymajorgrids,
yminorgrids,
tick align=outside,
tick pos=left,
legend pos = north east,
]

\addplot [color=UFblue, ultra thick]
  table[row sep=crcr]{%
            2.9875  2.1514  \\
            3.4  1.3366  \\
            3.8125  0.8968  \\
            4.225  0.5788  \\
            4.6375  0.3882  \\
            5.05  0.2646  \\
            5.4625  0.1824  \\
            5.875  0.135  \\
            6.2875  0.0986  \\
};
\addlegendentry{Mechanism~\ref{mech:prob3}}
\addplot [color=UFOrange, ultra thick, dashed]
  table[row sep=crcr]{%
            2.9875  2.8848  \\
            3.4  1.9526  \\
            3.8125  1.459  \\
            4.225  1.0694  \\
            4.6375  0.7862  \\
            5.05  0.5982  \\
            5.4625  0.44  \\
            5.875  0.3348  \\
            6.2875  0.2592  \\
};
\addlegendentry{Chen et al. 2023}
 \end{axis}
 \end{tikzpicture}
\end{subfigure}
       \caption{Error comparison between Mechanism~\ref{mech:prob3} in the current paper 
       and~\cite[Mechanism 3]{chen2023differentialsymbolic} with the initial state in Figure~\ref{fig:trajectories}       
       for privacy parameters in the range $\epsilon \in [0.1, 10]$ (top) and $\epsilon \in [3, 6]$ (bottom). Under strong privacy, e.g., $\epsilon = 0.1$, both mechanisms yield nearly identical average errors, but at more common privacy levels, e.g.,~$\epsilon = 3$, 
       Mechanism~\ref{mech:prob3} shows a~$25\%$ reduction in error relative to~\cite[Mechanism 3]{chen2023differentialsymbolic}. 
    }
   \label{fig:error_comp}
\end{figure}

\section{Conclusion}
We have presented a framework for privatizing symbolic trajectories based on the permute-and-flip mechanism, answering an open question and directly improving upon the state of the art. We have proved that our framework is, at worst, equivalent in expected error to the prior state-of-the-art mechanism, and empirically we find up to a~$55.7\%$ reduction in error at word~$5$-differential privacy. Future work will develop an online mechanism for implementing the permute-and-flip mechanism
in real time as trajectories are generated.

\appendix
\subsection{Proof of Theorem~1}\label{prf:thm1}
The permute-and-flip mechanism is word~$\epsilon$-differentially private~\cite[Theorem 1]{mckenna2020permute}, 
and we will show Mechanism~\ref{mech:prob1} is word~$\epsilon$-differentially private by showing it selects a private output word with the same probabilities as 
the permute-and-flip mechanism in 
Definition~\ref{def:PF}. 
From~\eqref{eq:pf_pmf} in Definition~\ref{def:PF}, the permute-and-flip mechanism selects the private output word~$w'$ with probability
        $\prob{\mathcal{M}_{PF}(w) = w'} = \exp\left(-\frac{\epsilon \ell_w}{2b}\right)\Psi(b, L, w'),$
    where~$\ell_w$ is the Hamming distance between~$w$ and~$w'$. From Mechanism~\ref{mech:prob1}, the probability of selecting the private output word~$w'$ is
    \begin{align}
        \prob{\mathcal{M}_1(w) = w'} &= \prob{\ell_w; w, b}\frac{1}{\binom{n}{\ell_w}(m-1)^{\ell_w}} \\ 
                                     &= \exp\left(-\frac{\epsilon \ell_w}{2b}\right)\Phi(b, \mathscr{L}, \ell),\label{eq:m1_prob}
    \end{align}
    which is from~\eqref{eq:mech1_pmf}.
    Because~$d(w, w') = d(w, \hat{w}(\ell))$, where~$\hat{w}(\ell)$ is an arbitrary word with~$\ell$ errors,     
    and because $\Psi(b, L, \hat{w}(\ell)) = \Phi(b, \mathscr{L}, \ell)$, we have $\Phi(b, \mathscr{L}, \ell) = \Psi(b, L, w')$. Then, we can  write~\eqref{eq:m1_prob} 
    as 
         $\prob{\mathcal{M}_1(w) = w'} = \exp\left(-\frac{\epsilon \ell_w}{2b}\right)\Psi(b, L, w').$
    As a result, we have~$\prob{\mathcal{M}_{PF}(w) = w'} = \prob{\mathcal{M}_1(w) = w'}$. Then Mechanism~\ref{mech:prob1} implements the permute-and-flip mechanism and hence is word~$\epsilon$-differentially private.\qed

\subsection{Proof of Theorem 2}\label{prf:thm2}
We begin by deriving the upper bound in the theorem statement. 
When using~\cite[Mechanism 3]{chen2023differentialsymbolic}, the probability of selecting a Hamming distance~$\ell$  is
        $\probem{\ell} = \frac{1}{Z}\exp\left(-\frac{\epsilon\ell}{2b}\right)\binom{n}{\ell}(m-1)^{\ell},$
    where~$\frac{1}{Z}$ is a normalization constant. Substituting this expression into the probability distribution     
    from~\eqref{eq:mech1_pmf} in Mechanism~\ref{mech:prob1}, we have
    \begin{equation}\label{eq:prob_ell_1}
        \prob{\ell} =  \probem{\ell} \Phi(b, \mathscr{L}, \ell)Z,
    \end{equation}
    where~$\prob{\ell}$ denotes the probability of selecting a Hamming distance~$\ell$ under Mechanism~\ref{mech:prob3}. 
    For~$\prob{\ell}$ to be a valid probability mass function, we require~$\sum_{\ell=1}^{n} \prob{\ell} = 1$. Then, 
    \begin{equation}\label{eq:norm_constant}
        Z = \frac{1}{\sum_{\ell=1}^n \probem{\ell} \Phi(b, \mathscr{L}, \ell)} = \frac{1}{\Eem{\Phi}}.
    \end{equation}
    Substituting~\eqref{eq:norm_constant} into~\eqref{eq:prob_ell_1} gives 
    \begin{equation}\label{eq:prob_ell_2}
         \prob{\ell} =  \frac{\probem{\ell} \Phi(b, \mathscr{L}, \ell)}{\Eem{\Phi(b, \mathscr{L}, \ell)}}.
    \end{equation}
    Taking the expectation of~$\ell$ using the probability mass function in~\eqref{eq:prob_ell_2} gives
    \begin{equation}\label{eq:turn_into_expectation}
        \E{\ell} = \frac{\sum_{\ell= 1}^n \ell\probem{\ell} \Phi(b, \mathscr{L}, \ell)}{\Eem{\Phi(b, \mathscr{L}, \ell)}}.
    \end{equation}
    From the Law of the Unconscious Statistician~\cite{ross2014introduction}, \eqref{eq:turn_into_expectation} becomes
        $\E{\ell} = \frac{\Eem{\ell\Phi(b, \mathscr{L}, \ell)}}{\Eem{\Phi(b, \mathscr{L}, \ell)}}.$ 
    From the definition of the covariance, we find 
    \begin{equation}\label{eq:covariance}
        \E{\ell}  
        =\Eem{\ell} +  \frac{\cov{\ell}{\Phi(b, \mathscr{L}, \ell)}}{\Eem{\Phi(b, \mathscr{L}, \ell)}}.
    \end{equation}
    Since the identity map~$\ell \mapsto \ell$ is strictly increasing
    and~$\Phi(b, \mathscr{L}, \ell)$ is decreasing in~$\ell$ from \cite[Lemma 4]{mckenna2020permute_arxiv}, we have that~$\cov{\ell}{\Phi(b, \mathscr{L}, \ell)}\leq0$ from Chebyshev's sum inequality~\cite{hardy1952inequalities}.     
    Additionally, because~$\Phi(b, \mathscr{L}, \ell)>0$ by definition,     
    ~$\Eem{\Phi(b, \mathscr{L}, \ell) }>0$. As a result, 
        $\E{\ell}\leq \Eem{\ell}.$
    The distribution~$\probem{\ell}$ is equal to~$\Binom(n, q)$ where~$q = \frac{(m-1)e^{-\frac{\epsilon}{2b}}}{1+(m-1)e^{-\frac{\epsilon}{2b}}}$. Then 
        $\Eem{\ell} = \frac{nC}{1+C}, $
    where~$C = (m-1)e^{-\frac{\epsilon}{2b}}$. 
    
    Next we derive the lower bound in the theorem statement. Returning to~\eqref{eq:covariance}, we have 
    $\E{\ell} = \Eem{\ell} +  \frac{\cov{\ell}{\Phi(b, \mathscr{L}, \ell)}}{\Eem{\Phi(b, \mathscr{L}, \ell)}}.$
From Gr\"us's inequality~\cite[Chapter X]{mitrinovic2013classical}, 
we have 
\begin{multline}
    |\cov{\ell}{\Phi(b, \mathscr{L}, \ell)}|\\\leq \frac{1}{4}(\sup_{\ell} \ell - \inf_{\ell} \ell)(\sup_{\ell} \Phi(b, \mathscr{L}, \ell) - \inf_{\ell} \Phi(b, \mathscr{L}, \ell)) \\= \frac{n}{4}(\Phi(b, \mathscr{L}, 0)-\Phi(b, \mathscr{L}, n)).
\end{multline}
Because~$\cov{\ell}{\Phi(b, \mathscr{L}, \ell)}\leq0$, 
\begin{equation}\label{eq:cov}
    \cov{\ell}{\Phi(b, \mathscr{L}, \ell)}\geq -\frac{n}{4}(\Phi(b, \mathscr{L}, 0)-\Phi(b, \mathscr{L}, n)).
\end{equation}
Substituting~$\Eem{\ell}$ and~\eqref{eq:cov} into~\eqref{eq:covariance} gives
    $\E{\ell} \geq \frac{nC}{1+C} -   \frac{n(\Phi(b, \mathscr{L}, 0)-\Phi(b, \mathscr{L}, n))}{4\Eem{\Phi(b, \mathscr{L}, \ell)}}.$
Finally, the random variable~$\ell$ 
is always contained in the interval
$[0, n]$. Thus, from the Hoeffding inequality~\cite{hoeffding1963probability},
    $\prob{|\ell-\E{\ell}|\geq t}\leq 2\exp(-\frac{2t^2}{n^2}). $
\qed

\subsection{Proof of Theorem \ref{thm:mech2}}\label{prf:thm3}
Following a similar approach to Mechanism~\ref{mech:prob1}, we will show Mechanism~\ref{mech:prob3} is word~$\epsilon$-differentially private 
by showing it selects an arbitrary private output
word with the same probability as 
the permute-and-flip mechanism in 
Definition~\ref{def:PF}. From Definition~\ref{def:PF}, the permute-and-flip mechanism selects the private output word~$w'$ with probability
        $\prob{\mathcal{M}_{PF}(w) = w'} = \exp\left(-\frac{\epsilon \ell_w}{2b}\right)\Psi(b,\mathcal{L}(\mathcal{Y}^n), w'),$
    where~$\ell_w$ is the Hamming distance between~$w$ and~$w'$. For Mechanism~\ref{mech:prob3}, the probability of selecting the same private output word~$w'$ is    
    \begin{align}
        \prob{\mathcal{M}_2(w) = w'} &= \prob{\ell_w; w, b}\frac{1}{N_{\mathcal{Y}^n}(\ell)}
        \\ &= \exp\left(-\frac{\epsilon \ell_w}{2b}\right)\Phi(b, \mathscr{L}^{\mathcal{Y}^n}, \ell).\label{eq:rewrite_with_psi}
    \end{align}
    Because~$d(w, w') = d(w, \hat{w}(\ell))$,  where~$\hat{w}(\ell)$ is an arbitrary word with~$\ell$ errors,     
    and $\Psi(b, \mathcal{L}(\mathcal{Y}^n), \hat{w}(\ell)) = \Phi(b, \mathscr{L}^{\mathcal{Y}^n}, \ell)$,     
    we have that~$\Phi(b, \mathscr{L}^{\mathcal{Y}^n}, \ell) = \Psi(b, \mathcal{L}(\mathcal{Y}^n), w')$. Accordingly, we equivalently write~\eqref{eq:rewrite_with_psi}  as 
         $\prob{\mathcal{M}_2(w) = w'} = \exp\left(-\frac{\epsilon \ell_w}{2b}\right)\Psi(b, \mathcal{L}(\mathcal{Y}^n), w').$
    Then~$\prob{\mathcal{M}_{PF}(w) = w'} = \prob{\mathcal{M}_2(w) = w'}$, completing the proof.
    \qed
\bibliographystyle{IEEEtran}
\bibliography{references}

\end{document}